\documentclass[journal, twoside]{IEEEtran}
\usepackage{mathrsfs}  

\usepackage{epsfig} 
\usepackage{times} 
\usepackage{amsmath} 
\usepackage{amssymb}  
\usepackage{algorithmic}
\usepackage{subfigure}
\usepackage{color}
\usepackage{cite}

\usepackage{algorithm}
\usepackage{bm}
\usepackage{graphicx}
\usepackage{multirow}
\usepackage{array}
\usepackage{graphics} 

\allowdisplaybreaks

\floatname{algorithm}{\emph{Algorithm}}

\newcommand{\cS}{\mathcal{S}}

\def\bull{\vrule height 1.1ex width 1.1ex depth -.0ex }

\newtheorem{theorem}{Theorem}

\newtheorem{proposition}{Proposition}
\newtheorem{corollary}{Corollary}
\newtheorem{definition}{Definition}

\title{Deterministic and Nonblocking Supervisory Control of Discrete Event Systems under Cyber Attacks}

\author{Feng Lin,~\IEEEmembership{Fellow,~IEEE}, Caisheng Wang,~\IEEEmembership{Fellow,~IEEE}, Jun Chen,~\IEEEmembership{Senior Member,~IEEE}, and Xiang Yin,~\IEEEmembership{Senior Member,~IEEE} 
	\thanks{The authors are supported in part by the National Science Foundation of USA under grants 2146615, 2432098,  and 2432099.}
	\thanks{Feng Lin and Caisheng Wang are with the Department of Electrical and Computer Engineering, Wayne State University, Detroit, MI 48202, USA (e-mail: flin@wayne.edu and cwang@wayne.edu). }
	\thanks{Jun Chen is with the Department of Electrical and Computer Engineering, Oakland University, Rochester, MI 48374, USA (e-mail: junchen@oakland.edu). }
	\thanks{Xiang Yin is with the Department of Automation, Key Laboratory of System Control and Information Processing, Shanghai Jiao Tong University, Shanghai 200240, China (e-mail:yinxiang@sjtu.edu.cn). }
}

\begin{document}

\maketitle

\begin{abstract}
	
	We investigate deterministic and nonblocking supervisory control of discrete event systems under cyber-attacks using the ALTER (Attack Language for Transition-basEd Replacement) model. 
	While prior works consider supervisory control that achieves either the large (upper bound) language or small (lower bound) language separately, deterministic supervisory control achieves both large language and small language at the same time to ensure that the language generated by the supervised system is unique and deterministic. 
	We introduce two new concepts of CA-D-controllability and CA-D-observability and prove that they are necessary and sufficient for the existence of a deterministic supervisor. 
	For nonblocking supervisory control, the objective is to ensure that the supervised system can always reach marked states under any attack scenario. We prove that relative closure, CA-D-controllability, and CA-D-observability together are necessary and sufficient for the existence of a nonblocking supervisor. 
	We further develop methods to verify CA-D-controllability and CA-D-observability. We also illustrate our results using a robotic system example.

	
\end{abstract}

\begin{keywords}
	
	Cyber-attacks, cyber security, discrete event systems,  deterministic control, nonblocking control
\end{keywords}

\section{Introduction} \label{s1}

The modern world is increasingly governed by Cyber-Physical Systems (CPS) with complex integrations of cyber entities, such as controllers, software, agents, and embedded systems, with physical processes. 
These systems feature a closed feedback loop, where sensors measure the physical world (plant); cyber components (controllers) process the data and make control decisions; and actuators execute commands to control the physical world. 
From smart grids and autonomous vehicles to industrial robots and medical devices, CPS are the backbone of critical infrastructure, manufacturing, and many other engineering systems, driving unprecedented efficiency and innovation.

However, this fusion of the digital and physical realms creates a new dimension of vulnerability. 
Cyber-attacks on CPS transcend traditional data breaches; they are attacks designed to cause deliberate physical disruption, damage, and even harm to human life. 
Unlike conventional IT systems where the primary targets of cyber-attacks are data and confidentiality, the primary targets of cyber-attacks in a CPS are safety, reliability, and functionality. 
Attackers seek to sabotage this critical cyber-physical feedback loop by manipulating sensor data to hide failures, injecting malicious commands to destroy equipment, or shutting down systems entirely to cause chaos.

Understanding and mitigating the threat of cyber-attacks are paramount. As our dependence on these intelligent systems grows, so does the imperative to secure them against cyber-attacks that threaten not only our data, but also our physical well-being and societal stability.

A CPS can often be modeled as a discrete event system (DES) that is characterized by a discrete state space and event-driven dynamics, or as a hybrid system that consists of both discrete events and continuous variables. 
While DES has been well investigated since the 1980s, the investigation of cyber-attacks in DES is relatively new. We review some key results on cyber-attacks in DES below. 
They are not exhaustive. For additional results on cyber-attacks in DES, the reader is referred to the surveys and monographs such as \cite{rashidinejad2019supervisory, hadjicostis2022cybersecurity, su2024cybersecurity}.

In \cite{su2018supervisor}, the protection of DES against sensor reading alteration attacks is studied, where adversaries manipulate sensor data to induce harmful supervisor actions. 
The attack-with-bounded-sensor-reading-alterations (ABSRA) problem is formulated, and it is shown that the supremal ABSRA exists and is computable for regular plant and supervisor models. 
A synthesis algorithm for ABSRA-robust supervisors is provided, ensuring that attacks are either detectable or harmless.

In \cite{fritz2018modeling}, a detection method for cyber-attacks in DES-modeled CPS is proposed, focusing on deception attacks such as replay and covert attacks. 
By employing permutation matrices to disrupt transmitted signals, the method enables effective detection and mitigation. The applicability of the method is demonstrated on a laboratory manufacturing system.

\cite{zhang2018framework} addresses the safety supervisory control problem under both sensor and actuator attacks. A special automaton, called the attack structure, is defined as the parallel composition of two specific structures. 
This automaton can be used either by attackers to guide the plant toward unsafe states while remaining hidden from the supervisor, or by supervisors to evaluate robustness against such attacks.

In \cite{wang2021supervisory}, a product observation reachability graph is introduced to determine the existence of resilient supervisors under actuator and sensor attacks modeled via labeled Petri nets. 
The same graph is further utilized to synthesize supervisors that detect attacks and disable controllable transitions before unsafe states are reached.

\cite{wakaiki2019supervisory} considers supervisory control of DES with multiple adversaries. It is shown that a resilient supervisor exists if the desired language is both controllable and observable with respect to adversaries. 
Standard DES supervisory control tools can then be applied to test the existence and construct the supervisor.

\cite{alves2022discrete} studies DES under partial observation in the presence of multiple attackers capable of altering outputs. 
The concept of P-observability is introduced to guarantee that supervisors can distinguish between attacked and unattacked outputs requiring different control actions. 
Definitions, visual representations, and algorithms are provided to verify P-observability for legal languages represented as automata.

In \cite{chen2022attackable}, detectability under sensor attacks is analyzed. The notions of attackable strong and weak detectability are introduced. Necessary and sufficient conditions are derived to assess system resilience.

\cite{su2023decidability} examines the cybersecurity of DES against smart sensor attacks. A condition based on risky pairs (damage strings and observable strings) is formulated to determine the existence of such attacks. 
A novel encoding scheme eliminates risky pairs, leading to the decidability of resilient nonblocking supervisor existence. A synthesis procedure is also provided.

\cite{wang2023attack} develops a finite-state transducer model for attacks such as insertion and deletion. A polynomial-time algorithm is proposed for resilient supervisor design, and an open-source tool is released for practical applications.

\cite{alves2024persistent} introduces persistent attackers in DES, which can stealthily modify communication between controllers and subsystems. 
A new attack model and corresponding design methods are developed, enabling such attackers to remain undetected while compromising system behavior.

In \cite{tai2023covert}, the synthesis of covert sensor-actuator attacks is investigated. A transformation-based approach converts the covert attack synthesis problem into a supervisory control problem. 
The work is extended to distributed DES in \cite{tai2023synthesis}, where a heuristic-based incremental synthesis method with finite-state automata is developed.


Our work also examines DES under cyber-attacks, presenting an approach that both differs from and complements existing research. We introduce a novel sensor attack model named the ALTER (Attack Language for Transition-basEd Replacement) model \cite{zheng2021modeling, zheng2024modeling, lin2023diagnosability, lin2024diagnosability}. In this model, the event of an attackable transition can be replaced by any string from its associated attack language. A key characteristic of the ALTER model is its dual nature: it is general enough to encapsulate common sensor attacks like deletions, insertions, replacements, and all-out attacks, while also being specific by explicitly defining each attackable transition with its corresponding attack language. This combination makes the ALTER model well-suited for the analysis and synthesis of robust supervisors in the face of cyber-attacks.

The ALTER model has been applied in supervisory control, diagnosability, detectability, and opacity analysis of DES \cite{zheng2021modeling, zheng2024modeling, 
wang2024coordination,  wang2025supervisory, lin2023diagnosability, lin2024diagnosability, ritsuka2025detectability}. 
In supervisory control, it is shown in \cite{zheng2024modeling} that cyber-attacks cause the supervised system to generate nondeterministic languages that are not unique.

To ensure the supervised system is safe and legal, the upper bound of all possible languages that may be generated under cyber-attacks, called the \emph{large language}, must be contained in a given legal language $J_a$. 
To this end, two concepts, CA-controllability and CA-observability (CA for cyber-attacks), are introduced. It is proven in \cite{zheng2024modeling} that there exists a supervisor whose large language equals $J_a$ if and only if $J_a$ is both CA-controllable and CA-observable.

To ensure the supervised system performs the minimally required tasks, the lower bound of all possible languages that may be generated under cyber-attacks, called the \emph{small language}, must  contain a given required language $J_r$. 
To do so, two concepts, CA-S-controllability and CA-S-observability (S for the small language) are introduced. It is proven in \cite{wang2025supervisory} that there exists a supervisor whose small language equals $J_r$ if and only if $J_r$ is both CA-S-controllable and CA-S-observable.


This paper studies deterministic and nonblocking supervisory control under cyber-attacks. A supervisor is defined as deterministic when its large language and small language are identical.
Evidently, if the upper and lower bounds of all possible languages that can be generated under cyber-attacks are the same, then the language produced by the supervised system is deterministic and unique.
To establish the condition for the existence of deterministic control, we present two new concepts: CA-D-controllability and CA-D-observability (with D denoting deterministic).
We demonstrate that a supervisor exists for which the large language and small language are both equal to $J$ if and only if $J$ is both CA-D-controllable and CA-D-observable.
Methods to verify CA-D-controllability and CA-D-observability are also developed.

{Note that the existence condition for deterministic control cannot be obtained by combining the existence conditions for large-language control and small-language control. This is because, even if the existence conditions for large-language control and small-language control are both satisfied, there is no guarantee that the supervisor achieving the large language is the same as the supervisor achieving the small language. }

{A supervisor is called nonblocking if the supervised system can always reach some marked (final) states and in the process, no event enabled by the supervisor under one cyber-attack can be disabled by the supervisor under another cyber-attack.}
In other words, reaching the marked states is guaranteed under all possible cyber-attacks. We prove that there exists a nonblocking supervisor whose marked language is equal to $K$ if and only if $K$ is relatively closed and its prefix closure is both CA-D-controllable and CA-D-observable. 

The principal contributions of this work are fourfold. First, we propose the deterministic and nonblocking supervisory control problems to address the control requirements of cyber-physical systems subjected to cyber-attacks. Second, we introduce the concepts of CA-D-controllability and CA-D-observability and prove that they provide the existence condition for a deterministic supervisory control solution. Third, we derive the existence condition for nonblocking supervisory control under cyber-attacks. Fourth, we develop a verification method for CA-D-controllability and CA-D-observability, implementing it using automata. 

The paper is organized as follows. In Section II, discrete event systems and supervisory control under cyber-attacks, including the ALTER model, are reviewed. 
In Section III, deterministic supervisory control is investigated. CA-D-controllability and CA-D-observability are introduced, and the existence condition for deterministic supervisor is derived. 
Section IV considers nonblocking supervisory control. Marked states and nonblocking supervisor are defined. Existence condition for nonblocking supervisor is derived. 
In Section V, methods to check CA-D-controllability and CA-D-observability are developed. Section VI presents an example of a robotic system to illustrate the results of the paper.

\section{Discrete event systems under cyber-attacks} \label{s2}


This section provides a brief overview of supervisory control for discrete-event systems under cyber-attacks, as established in prior work \cite{zheng2021modeling, zheng2024modeling}. The system to be controlled, known as the plant, is modeled as a deterministic automaton \cite{cassandras2021introduction}
$$
G=( X, \Sigma , \xi , x_0, X_m ),
$$
where $X$ is the state set, $\Sigma$ is the event set, $\xi : X \times \Sigma \rightarrow X$ is the partial transition function, $x_0$ is the initial state, and $X_m$ is the set of marked states.
We assume, without loss of generality, that the automaton $G$ is trim, meaning every state is both reachable from the initial state $x_0$ and can reach a marked state in $X_m$. The set of all possible transitions is also represented by $\xi$, defined as $\xi = \{ (x, \sigma, x'): \xi (x, \sigma) =x' \}$.



The set of all strings formed from the event set $\Sigma$ is denoted by $\Sigma ^*$. The transition function $\xi$ is extended to the domain of strings, $\xi : X \times \Sigma^* \rightarrow X$, following the standard convention. The notation $\xi (x,w)!$ signifies that the transition $\xi (x,w)$ is defined. The language \emph{generated} by $G$ is defined as the set of all strings that are defined in $G$ starting from the initial state $x_0$, which is:
$$
L(G) = \{ w \in \Sigma ^*: \xi (x_0,w)! \}.
$$
The language {\em marked} by $G$ is the set of all strings defined in $G$ from the initial state $x_0$ to a marked state, that is,
$$
L_m(G) = \{ w \in L(G): \xi (x_0,w) \in X_m \}.
$$


A language $L \subseteq \Sigma ^*$ is, in general, a set of strings. For a given string $w \in \Sigma ^*$, the notation $w' \leq w$ indicates that $w'$ is a prefix of $w$. The (prefix) closure of $L$, written as $\overline{L}$, is the collection of all prefixes of all strings in $L$, specifically:
$$
\overline{L}=\{w' \in \Sigma^* :(\exists w\in L)w' \leq w\}
$$
A language is (prefix) closed if it equals its prefix closure. By the definition, $L(G)$ is closed. Since $G$ is trim, we have 
$$
\overline{L_m(G)} = L(G).
$$


A supervisor is used as a controller to achieve a specific control objective by governing the system's behavior. This supervisor possesses the ability to control a subset of events and observe another subset. The events that can be controlled, known as \emph{controllable events}, are denoted by $\Sigma _c$ ($\subseteq \Sigma$). The remaining events, which are uncontrollable, form the set $\Sigma _{uc} = \Sigma - \Sigma _c$. Similarly, the events that can be observed, called \emph{observable events}, are denoted by $\Sigma _o$ ($\subseteq \Sigma$). The complement of this set constitutes the unobservable events, $\Sigma _{uo} = \Sigma - \Sigma _o$.

For a given string, its observation is described by the \emph{projection} $P:\Sigma^* \rightarrow \Sigma^*_o$ which is defined as
\begin{equation*}
	\begin{aligned}
		& P(\varepsilon)= \varepsilon \\
		& P(\sigma) =\begin{cases}
			\sigma &\text{if $\sigma \in \Sigma_o$} \\
			\varepsilon &\text{if $\sigma\in \Sigma _{uo}$} \\
		\end{cases}\\
		& P(w\sigma)=P(w)P(\sigma) ,  w\in \Sigma^*, \sigma \in \Sigma,
	\end{aligned}
\end{equation*}
where $\varepsilon$ is the empty string.

After the occurrence of $w\in L(G)$, a \emph{supervisor} $\cS$ observes $v \in P (w)$. Based on $v$, $\cS$ enables a set of events, denoted by $\cS(v)$. Hence, $\cS$ is a map 
$$
\cS:P(L(G)) \to \Gamma,
$$
where $\Gamma = 2^{\Sigma}$ is the set of all possible controls. It is required that $\Sigma_{uc}\subseteq \cS(v)$ because uncontrollable events cannot be disabled. 

The {\em closed-loop system\/} is denoted by $\cS/G$. The language generated by $\cS/G$, denoted by $L(\cS/G)$, is defined recursively as 
\begin{enumerate}
	\item
	The empty string belongs to $L(\cS/G)$, that is,
	$$
	\varepsilon \in L(\cS/G).
	$$

	\item
	If $w$ belongs to $L(\cS/G)$, then for all $\sigma \in \Sigma$, $w \sigma$ belongs to $L(\cS/G)$ if and only if $w \sigma$ is physically possible in $L(G)$ 
	and $\sigma$ is uncontrollable or enabled by $\cS$ after observing $v=P(w)$, that is, for all $ w \in L(\cS/G)$ and $\sigma \in \Sigma$,
	\begin{align*}
		&  w \sigma\in L(\cS/G) \\
		\Leftrightarrow \
		& w \sigma \in L(G) \wedge (\sigma \in \Sigma _{uc} \vee \sigma \in \cS(P(w))). 
	\end{align*}
\end{enumerate}


In conventional supervisory control with partial observations, the objective for a given closed specification language $J \subseteq L(G)$ is to design a supervisor $\cS$ that satisfies $L(\cS/G)=J$. The necessary and sufficient condition for the existence of such a supervisor is defined by the properties of controllability and observability. A closed language $J$ is said to be \emph{controllable} with respect to $L(G)$ and $\Sigma_{uc}$ if
$$
{J} \Sigma_{uc} \cap L(G) \subseteq {J}.
$$
$J$ is {\em observable} with respect $L(G)$ and $\Sigma_o$ \cite{lin1988observability} if 
\begin{align*}
	& (\forall w \in \Sigma ^*)(\forall \sigma \in \Sigma) w \sigma \in J \\
	& \Rightarrow (\forall w' \in P^{-1} (P(w))) (w' \in J \wedge w' \sigma \in L(G) \Rightarrow w' \sigma \in J) .
\end{align*}


As established in \cite{lin1988observability}, a supervisor $\mathcal{S}$ satisfying $L(\mathcal{S}/G)=J$ exists if and only if the language $J$ is controllable with respect to $L(G)$ and $\Sigma_{uc}$ and observable with respect to $L(G)$ and $\Sigma_o$.


{Cyber-attackers can alter a supervisor's observations by targeting the sensors or the communication channel from the plant to the supervisor.} Consequently, we operate under the assumption that certain observable events are vulnerable to attack. To describe these sensor attacks, we adopt the \emph{ALTER} (Attack Language, Transition-basEd, Replacement) attack model from prior work \cite{zheng2021modeling, zheng2024modeling, lin2023diagnosability, lin2024diagnosability}, which is reviewed as follows (for a comprehensive explanation, see \cite{zheng2024modeling, lin2024diagnosability}).


The set of observable events that are vulnerable to attack, referred to as \emph{attackable observable events}, is denoted by $\Sigma _o^a \subseteq \Sigma _o$. The corresponding set of attackable observable transitions is $\xi ^a = \{ (x, \sigma, x') \in \xi : \sigma \in \Sigma _o^a \}$.

For every attackable transition $tr=(x, \sigma, x')\in \xi ^a$, it is assumed that an attacker can replace the observation of event $\sigma$ with any string from a specific attack language $A_{tr} \subseteq \Sigma_o ^*$. This language $A_{tr}$ is determined based on the knowledge of the attacker and the sensor. The collection of all these \emph{attack languages} is represented by $\mathbb{A}$.
$$
\mathbb{A}=\{ A_{tr} \subseteq \Sigma_o ^* :tr=(x, \sigma, x')\in \xi ^a\}.
$$

Sensor attacks are then modeled by a mapping from the set of attackable observable transitions to the set of languages $\mathbb{A}$ 
$$
\pi : \xi ^a\rightarrow \mathbb{A}
$$
where $\pi(tr)=A_{tr}$. This ALTER model is general and includes the following special cases. (1) Deletion: if $A_{\sigma} = \{\varepsilon, ... \}$, then $\sigma$ may be deleted. 
(2) Replacement: if $A_{\sigma} = \{\alpha, ... \}$, then $\sigma$ may be replaced by $\alpha$. 
(3) Insertion: if $A_{\sigma} = \{\sigma \alpha, \alpha \sigma, ... \}$, then $\alpha$ may be inserted. 
(4) All-out attack: if $A_{\sigma} = \Sigma_o ^*$, then the attacker can change $\sigma$ to any observable strings. 

{Note that knowing the attack languages is not equivalent to knowing the attacker. If the attacker is not known at all, then we can simply let $A_{tr} = \Sigma_o ^*$ (all-out attack). 
However, if we do have some knowledge of the attacker, for example, if we know that the attacker is stealthy, then the attack languages are much smaller, because, in order to preserve the stealthiness, the attacker cannot arbitrarily alter the event observations. 
Note also that the larger the attack languages $A_{tr}$ are, the less is known about the attacker. As we will show in the paper, the larger  $A_{tr}$ are, the less it is likely that the existence condition for a resilient supervisor will be satisfied
If the existence condition is not satisfied, then there are two ways to proceed: (1) To know more about the attacker so that the uncertainty described by $A_{tr}$ can be reduced. 
(2) To restrict the behavior of the supervised system to ensure safety/legality of the supervised system. Although it is better that the attack languages $A_{tr}$ are smaller, we do not put any requirements/restrictions on $A_{tr}$ or on how $A_{tr}$ are obtained. A separate but important problem is how to synthesize $A_{tr}$ from the attacker’s point of view. The synthesis may focus on issues such as how to ensure the attacks are stealthy or how to compromise the safety of a supervised system.}

For a string $w \in \Sigma ^*$, we use $|w|$ to denote its length. If a string $w = \sigma _1 \sigma _2 ..., \sigma _{|w|}$ occurs in $G$, the set of possible strings after cyber-attacks in the observation channel, 
called \emph{observation attacks}, is denoted by $\Theta ^\pi(w)$ and obtained as follows. Denote $x_k = \xi(x_0,\sigma _1\cdots\sigma _{k}), k= 1, 2, ..., |w|$, then
$$
\Theta ^\pi(w) = L_1 L_2 ... L_{|w|},
$$
where
\begin{align*}
	L_k = \left\{ \begin{array}{ll}
		\{ \sigma _k \} & \mbox{if } (x_{k-1}, \sigma _k, x_k) \not\in \xi ^a \\
		A_{(x_{k-1}, \sigma _k, x_k)} & \mbox{if } (x_{k-1}, \sigma _k, x_k) \in \xi ^a \end{array} \right .
\end{align*}

Note that $\Theta ^\pi(w)$ may contain more than one string. Hence, $\Theta ^\pi$ is a mapping from $L(G)$ to $2^{\Sigma ^*}$:
$$
\Theta ^\pi: L(G) \rightarrow 2^{\Sigma ^*}.
$$

The observation under both partial observation and observation attacks is then given by
$$
\Phi ^\pi = P \circ \Theta ^\pi,
$$
where $\circ$ denotes composition (of functions). In other words, for $w \in L(G)$, $\Phi ^\pi (w) = P ( \Theta ^\pi (w))$. Hence, $\Phi ^\pi$ is a mapping from $L(G)$ to $2^{\Sigma _o^*}$:
$$
\Phi ^\pi: L(G) \rightarrow 2^{\Sigma _o ^*}.
$$

We extend $P$, $\Theta ^\pi$, and $\Phi ^\pi$ from strings $w$ to languages $L \subseteq L(G)$ in the usual way as
\begin{align*}
	& P(L) = \{ v \in \Sigma^*_o: (\exists w \in L) v = P(w) \} \\
	& \Theta ^\pi(L) = \{ v \in \Sigma^*: (\exists w \in L) v  \in
	\Theta ^\pi(w) \} \\
	& \Phi ^\pi(L) = \{ v \in \Sigma^*_o: (\exists w \in L) v  \in \Phi ^\pi(w) \}.
\end{align*}


The ALTER model possesses a dual nature, being both general and specific. Its generality stems from its capacity to encompass most attack types such as deletion, insertion, replacement, and all-out attacks. Its specificity lies in the precise definition of attacks through individual attack languages. This novel approach enables us to address numerous practical problems and derive concrete solutions.

Under cyber-attacks, a supervisor $\cS$ is now a mapping
$$
\cS :\Phi ^\pi (L(G)) \rightarrow \Gamma .
$$


We further assume that attackers can compromise the control channel from the supervisor to the plant, altering the disablement or enablement of certain controllable events. This means an attacker can enable an event that the supervisor has disabled, or disable an event that the supervisor has enabled.
The set of controllable events vulnerable to such attacks, termed \emph{attackable controllable events}, is denoted by $\Sigma _c^a \subseteq \Sigma _c$. It is important to note that uncontrollable events are always permitted to occur and are not subject to disablement by attackers.


In the presence of cyber-attacks on the control channel, referred to as \emph{control attacks}, the intended control command $\gamma \in \Gamma$ can be altered. An attacker can add events from $\Sigma _c^a$ to $\gamma$ or remove events from $\gamma$ that are in $\Sigma _c^a$. Consequently, the set of possible control commands after such attacks is:
\begin{align*}
	\Delta(\gamma) = \{ \gamma_a \in \Gamma :  (\exists \gamma ', \gamma '' \subseteq  \Sigma _c^a ) \gamma_a = (\gamma - \gamma ')\cup  \gamma ''\} .
\end{align*}


When a supervisor issues a control command $\mathcal{S}(v)$ based on an observation $v \in \Phi ^\pi (L(G))$, this command is subject to alteration by control attacks. The set of all possible control commands that the plant may actually receive under such attacks is denoted by $\mathcal{S}^a (v)$, is defined as
\begin{align*}
	\cS^a(v)=\Delta(\cS(v)).
\end{align*}

%

The supervised system under cyber-attacks is denoted as $\cS ^a/G$. Due to these attacks, the language generated by this system, $L(\cS ^a/G)$, becomes nondeterministic and is therefore not guaranteed to be unique.

To ensure the legality and safety of the supervised system, we consider the upper bound of all possible languages that $\cS ^a/G$ can generate. This upper bound is referred to as the \emph{large language}, defined as follows.

\begin{definition} 
	The {large language} of the supervised system $\cS ^a/G$, denoted by $L_a (\cS ^a/G)$, is defined recursively as follows.
	\begin{enumerate}
		\item
		The empty string belongs to $L_a (\cS ^a/G)$, that is,
		$$
		\varepsilon \in L_a (\cS ^a/G).
		$$
		\item
		If $w$ belongs to $L_a (\cS ^a/G)$, then for all $\sigma \in \Sigma$, $w \sigma$ belongs to $L_a (\cS ^a/G)$ if and only if $w \sigma$ is physically possible in $L(G)$ 
		and $\sigma$ is uncontrollable or enabled by $\cS ^a$ in {\em some} situations, that is, for all $ w \in L_a (\cS ^a/G)$ and $\sigma \in \Sigma$,
		\begin{equation} \label{LargeL}
			\begin{split}
				&  w \sigma\in L_a (\cS ^a/G) \\
				\Leftrightarrow \
				& w \sigma \in L(G) \wedge (\sigma \in \Sigma _{uc} \\
				& \vee (\exists v \in \Phi ^\pi (w))(\exists \gamma \in \cS ^a(v)) \sigma \in \gamma). 
			\end{split}
		\end{equation}
	\end{enumerate}
\end{definition} 

\vspace{0.1in}


To fulfill certain minimum performance requirements, we also consider the lower bound of all potential languages generated by the supervised system $\cS ^a/G$. This lower bound is termed the \emph{small language} and is defined as follows.

\begin{definition}
	The {small language} of the supervised system $\cS ^a/G$, denoted by $L_r (\cS ^a/G)$, is defined recursively as follows.
	\begin{enumerate}
		\item
		The empty string belongs to $L_r (\cS ^a/G)$, that is,
		$$
		\varepsilon \in L_r (\cS ^a/G).
		$$
		\item
		If $w$ belongs to $L_r (\cS ^a/G)$, then for all $\sigma \in \Sigma$, $w \sigma$ belongs to $L_r (\cS ^a/G)$ if and only if $w \sigma$ is physically possible in $L(G)$ 
		and $\sigma$ is uncontrollable or enabled by $\cS ^a$ in {\em all} situations, that is, for all $ w \in L_r (\cS ^a/G)$ and $\sigma \in \Sigma$,
		\begin{equation} \label{SmallL}
			\begin{split}
				&  w \sigma\in L_r (\cS ^a/G) \\
				\Leftrightarrow \
				& w \sigma \in L(G) \wedge (\sigma \in \Sigma _{uc} \\
				& \vee (\forall v \in \Phi ^\pi (w))(\forall \gamma \in \cS ^a(v)) \sigma \in \gamma).
			\end{split}
		\end{equation}
	\end{enumerate}
\end{definition}

\vspace{0.1in}

%

The first control objective is to design a supervisor $\cS :\Phi ^\pi (L(G)) \rightarrow \Gamma$ such that the large language satisfies $L_a (\cS ^a/G)=J_a$. This objective is needed when the goal is to guarantee that all possible behaviors generated by the supervised system under attack remain within a legal, safe, or admissible language $J_a$.

To establish the necessary and sufficient conditions for the existence of such a supervisor, the following two definitions are introduced in \cite{zheng2024modeling}.

\begin{definition} 
	A closed language $J \subseteq L(G)$ is {\em CA-controllable} with respect to $L(G)$, $\Sigma _{uc}$, and $\Sigma ^a_c$ if
	\begin{align*}
		J (\Sigma _{uc} \cup \Sigma ^a_c) \cap L(G) \subseteq J.
	\end{align*}
\end{definition}

\begin{definition} 
	A closed language $J \subseteq L(G)$ is {\em CA-observable} with respect to $L(G)$, $\Sigma _{o}$, $\Sigma ^a_o$ and $\Phi^\pi$ if
	\begin{equation} \label{CAO} 
		\begin{split}
			& (\forall w \in \Sigma ^*)(\forall \sigma \in \Sigma) (w \sigma \in J \\
			& \Rightarrow (\exists v \in \Phi ^\pi (w))(\forall w' \in (\Phi ^\pi)^{-1} (v)) \\
			& (w' \in J \wedge w' \sigma \in L(G) \Rightarrow w' \sigma \in J)),
		\end{split}
	\end{equation}
	where $(\Phi ^\pi)^{-1}$ is the inverse mapping of $\Phi ^\pi$: for $v \in \Phi ^\pi (L(G))$,\\
	\begin{align*}
		(\Phi ^\pi)^{-1} (v) = \{ w' \in L(G) : v \in \Phi ^\pi (w') \}.
	\end{align*}
\end{definition}

\vspace{0.05in}


As established in \cite{zheng2024modeling}, a supervisor $\cS$ satisfying $L_a(\cS^a/G)=J_a$ exists if and only if the language $J_a$ is both CA-controllable with respect to $L(G)$, $\Sigma _{uc}$, and $\Sigma ^a_c$, and CA-observable with respect to $L(G)$, $\Sigma _{o}$, $\Sigma ^a_o$, and $\Phi^\pi$.

The second control objective is to design a supervisor $\cS :\Phi ^\pi (L(G)) \rightarrow \Gamma$ such that the small language $L_r (\cS ^a/G)=J_r$. This is needed when we want to ensure that the languages generated by $\cS$ always contain some minimum required language $J_r$.

To obtain the existence conditions for a supervisor that achieves the small language, the following definitions are given in \cite{wang2025supervisory}.

\begin{definition} 
	A closed language $J \subseteq L(G)$ is {\em CA-S-controllable} with respect to $L(G)$, $\Sigma _{uc}$, and $\Sigma ^a_c$ if
	\begin{align*}
		(J \Sigma_{uc} \cap L(G) \subseteq J) \wedge (J \subseteq (\Sigma - \Sigma^a_c)^*).
	\end{align*}
\end{definition}

\begin{definition} 
	
	A closed language $J \subseteq L(G)$ is {\em CA-S-observable} with respect to $L(G)$, $\Sigma _{o}$, $\Sigma ^a_o$ and $\Phi^\pi$ if
	\begin{align*}
		& (\forall w \in J)(\forall \sigma \in \Sigma) ((w\sigma \in L(G) \wedge (\forall v \in \Phi ^\pi(w)) \\
		&  (\exists w' \in J) v \in  \Phi ^\pi(w') \wedge w'\sigma \in J) \Rightarrow w\sigma \in J). 
	\end{align*}
\end{definition}

\vspace{0.1in}


As demonstrated in \cite{wang2025supervisory}, a supervisor $\cS$ for which $L_r(\cS^a/G)=J_r$ exists if and only if $J_r$ is CA-S-controllable with respect to $L(G)$, $\Sigma _{uc}$, and $\Sigma ^a_c$, and is CA-S-observable with respect to $L(G)$, $\Sigma _{o}$, $\Sigma ^a_o$, and $\Phi^\pi$.

While the control to achieve the large language and the control to achieve the small language are useful in different applications, in some applications, we want the control to achieve both the large language and the small language at the same time. 
In other words, the control objective is to design a supervisor $\cS :\Phi ^\pi (L(G)) \rightarrow \Gamma$ such that $L_a (\cS ^a/G)= L_r (\cS ^a/G)= J$. 
We call such a control \emph{deterministic control}, because $L_a (\cS ^a/G)= L_r (\cS ^a/G)$ means that the language generated by $\cS ^a/G$ is deterministic (that is, unique).
As to be shown, the deterministic control is especially important for nonblocking control.

\section{Deterministic supervisory control} \label{s3}

In this section, we investigate deterministic supervisory control to achieve $L_a (\cS ^a/G)= L_r (\cS ^a/G)= J$. To obtain the existence conditions for a deterministic supervisor, let us introduce two new definitions.

\begin{definition} 
	A closed language $J \subseteq L(G)$ is {\em CA-D-controllable} with respect to $L(G)$, $\Sigma _{uc}$, and $\Sigma ^a_c$ if
	\begin{align*}
		& (J (\Sigma _{uc} \cup \Sigma ^a_c) \cap L(G) \subseteq J) \wedge (J \subseteq (\Sigma - \Sigma^a_c)^*).
	\end{align*}
\end{definition}

\begin{definition} 
	A closed language $J \subseteq L(G)$ is {\em CA-D-observable} with respect to $L(G)$, $\Sigma _{o}$, $\Sigma ^a_o$ and $\Phi^\pi$ if
	\begin{align*}
		& (\forall w, w' \in  J)(\forall \sigma \in \Sigma) (w\sigma \in J \wedge w'\sigma \in L(G) \\
		& \wedge w'\sigma \notin J \Rightarrow \Phi ^\pi(w) \cap \Phi ^\pi(w')=\emptyset ).
	\end{align*}
\end{definition}

\vspace{0.07in}

The following proposition describes the relations among CA-controllability, CA-S-controllability, and CA-D-controllability.

\begin{proposition} \label{proposition2}
	
	$J$ is CA-controllable and CA-S-controllable with respect to $L(G)$, $\Sigma _{uc}$, and $\Sigma ^a_c$ if and only if it is CA-D-controllable with respect to $L(G)$, $\Sigma _{uc}$, and $\Sigma ^a_c$.
	
\end{proposition}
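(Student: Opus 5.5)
The plan is to unfold the three definitions and compare them clause by clause. CA-D-controllability is a conjunction of two clauses: (i) $J(\Sigma_{uc}\cup\Sigma^a_c)\cap L(G)\subseteq J$, and (ii) $J\subseteq(\Sigma-\Sigma^a_c)^*$. Clause (i) is exactly CA-controllability. CA-S-controllability is the conjunction of $J\Sigma_{uc}\cap L(G)\subseteq J$ with clause (ii). So the proposition reduces to showing that (i) implies $J\Sigma_{uc}\cap L(G)\subseteq J$, and then collecting the clauses.

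For the ``only if'' direction, assume $J$ is both CA-controllable and CA-S-controllable. CA-controllability gives clause (i), and CA-S-controllability gives clause (ii). Their conjunction is exactly CA-D-controllability.

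For the ``if'' direction, assume $J$ is CA-D-controllable, so that (i) and (ii) hold. Clause (i) is literally CA-controllability. For CA-S-controllability, clause (ii) is given directly. For the other conjunct, note that $\Sigma_{uc}\subseteq\Sigma_{uc}\cup\Sigma^a_c$. Hence
$$J\Sigma_{uc}\cap L(G)\subseteq J(\Sigma_{uc}\cup\Sigma^a_c)\cap L(G)\subseteq J,$$
using monotonicity of concatenation and intersection in the first step and (i) in the second. So $J$ is CA-S-controllable.

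There is no real obstacle here. The only point needing care is the monotonicity step $J\Sigma_{uc}\subseteq J(\Sigma_{uc}\cup\Sigma^a_c)$, which is immediate. Closedness of $J$ and the condition $J\subseteq L(G)$ are common hypotheses in all three definitions, so they need no separate treatment.
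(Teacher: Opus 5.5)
Your proposal is correct and follows the paper's own proof: the paper also unfolds the three definitions into conjunctions and observes that $J(\Sigma_{uc}\cup\Sigma^a_c)\cap L(G)\subseteq J$ implies $J\Sigma_{uc}\cap L(G)\subseteq J$, so the combined condition collapses to CA-D-controllability. Your write-up states the monotonicity step a little more explicitly, but the argument is the same.
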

\noindent {\em Proof}

$J$ is CA-controllable and CA-S-controllable with respect to $L(G)$, $\Sigma _{uc}$, and $\Sigma ^a_c$ if and only if
\begin{align*}
	& (J (\Sigma _{uc} \cup \Sigma ^a_c) \cap L(G) \subseteq J) \\
	& \wedge (J \Sigma_{uc} \cap L(G) \subseteq J) \wedge (J \subseteq (\Sigma - \Sigma^a_c)^*) \\
	\Leftrightarrow \
	& (J (\Sigma _{uc} \cup \Sigma ^a_c) \cap L(G) \subseteq J) \wedge (J \subseteq (\Sigma - \Sigma^a_c)^*) \\
	& (\mbox{because } J (\Sigma _{uc} \cup \Sigma ^a_c) \cap L(G) \subseteq J \\
	& \Rightarrow J \Sigma_{uc} \cap L(G) \subseteq J) ,
\end{align*}
that is, $J$ is CA-D-controllable with respect to $L(G)$, $\Sigma _{uc}$, and $\Sigma ^a_c$.

\hfill \bull

The following proposition describes the relations among CA-observability, CA-S-observability, and CA-D-observability.

\begin{proposition} \label{proposition1}
	
	If a closed language $J \subseteq L(G)$ is  CA-D-observable with respect to $L(G)$, $\Sigma _{o}$, $\Sigma ^a_o$ and $\Phi^\pi$, then it is CA-observable and CA-S-observable with respect to $L(G)$, $\Sigma _{o}$, $\Sigma ^a_o$ and $\Phi^\pi$.
	
\end{proposition}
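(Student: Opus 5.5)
We prove the two implications separately, each directly from the definitions and each by contraposition. Throughout we use one elementary fact: for $w \in L(G)$ the set $\Phi^\pi(w)$ is nonempty. This holds because $\Theta^\pi(w)$ is a concatenation of languages $L_k$, and each $L_k$ is either a singleton or an attack language $A_{tr}$. We assume, as the model implicitly does, that each $A_{tr}$ is nonempty, since the attacker must produce some observation for the transition.

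\emph{CA-D-observability implies CA-observability.} Take $w\sigma \in J$. Then $w \in J$ because $J$ is closed, so $\Phi^\pi(w)$ is nonempty; pick any $v \in \Phi^\pi(w)$. Let $w' \in (\Phi^\pi)^{-1}(v)$ with $w' \in J$ and $w'\sigma \in L(G)$, and suppose toward a contradiction that $w'\sigma \notin J$. The pair $w, w' \in J$ then satisfies the hypotheses of CA-D-observability, so $\Phi^\pi(w) \cap \Phi^\pi(w') = \emptyset$. But $v$ lies in both sets, since $v \in \Phi^\pi(w')$ by the definition of $(\Phi^\pi)^{-1}(v)$. This contradiction shows $w'\sigma \in J$. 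In fact any $v \in \Phi^\pi(w)$ works as the required witness, which is more than CA-observability asks for.

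\emph{CA-D-observability implies CA-S-observability.} Take $w \in J$ and $\sigma$ with $w\sigma \in L(G)$, and assume that for every $v \in \Phi^\pi(w)$ there is $w' \in J$ with $v \in \Phi^\pi(w')$ and $w'\sigma \in J$. We must show $w\sigma \in J$. Suppose instead $w\sigma \notin J$. Choose some $v \in \Phi^\pi(w)$, which exists by nonemptiness, and let $w'$ be the string the hypothesis provides for it. Now apply CA-D-observability with the roles swapped: $w'\sigma \in J$, $w\sigma \in L(G)$, and $w\sigma \notin J$. This gives $\Phi^\pi(w') \cap \Phi^\pi(w) = \emptyset$, which contradicts $v$ lying in both sets.

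The only real obstacle is the nonemptiness of $\Phi^\pi(w)$. Without it, the CA-observability witness $v$ need not exist, and the CA-S-observability antecedent could hold vacuously. We will state it explicitly as a standing assumption that every $A_{tr} \neq \emptyset$, which is natural because an attack replaces an observation rather than blocking the transition. Everything else is routine quantifier manipulation.
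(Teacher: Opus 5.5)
Your proof is correct and follows essentially the same route as the paper, which also argues each part by contradiction: it takes a single $v \in \Phi^\pi(w)$ that also lies in $\Phi^\pi(w')$ for the offending $w'$, contradicting $\Phi^\pi(w)\cap\Phi^\pi(w')=\emptyset$. Your explicit assumption that every $A_{tr}\neq\emptyset$ is what justifies the paper's unremarked step $(\forall v \in \Phi^\pi(w))\cdots \Rightarrow (\exists v \in \Phi^\pi(w))\cdots$, and it is genuinely needed: if an attack language is empty, then $\Phi^\pi(w)=\emptyset$ for some $w \in J$, which makes CA-D-observability hold vacuously for pairs involving that $w$ while CA-observability or CA-S-observability can fail.
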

\noindent {\em Proof}

We prove the proposition by contradiction. Suppose that $J$ is not CA-observable with respect to $L(G)$, $\Sigma _{o}$, $\Sigma ^a_o$ and $\Phi^\pi$. Then
\begin{align*}
	& \neg (\forall w \in \Sigma ^*)(\forall \sigma \in \Sigma) (w \sigma \in J \\
	& \Rightarrow (\exists v \in \Phi ^\pi (w))(\forall w' \in (\Phi ^\pi)^{-1} (v)) \\
	& (w' \in J \wedge w' \sigma \in L(G) \Rightarrow w' \sigma \in J)) \\
	\Leftrightarrow \
	& (\exists w \in \Sigma ^*)(\exists \sigma \in \Sigma) (w \sigma \in J \\
	& \wedge \neg (\exists v \in \Phi ^\pi (w))(\forall w' \in (\Phi ^\pi)^{-1} (v)) \\
	& (w' \in J \wedge w' \sigma \in L(G) \Rightarrow w' \sigma \in J)) \\
	\Leftrightarrow \
	& (\exists w \in \Sigma ^*)(\exists \sigma \in \Sigma) (w \sigma \in J \\
	& \wedge (\forall v \in \Phi ^\pi (w))(\exists w' \in (\Phi ^\pi)^{-1} (v)) \\
	& (w' \in J \wedge w' \sigma \in L(G) \wedge w' \sigma \not\in J)) \\
	\Rightarrow \
	& (\exists w \in \Sigma ^*)(\exists \sigma \in \Sigma) (w \sigma \in J \\
	& \wedge (\exists v \in \Phi ^\pi (w))(\exists w' \in (\Phi ^\pi)^{-1} (v)) \\
	& (w' \in J \wedge w' \sigma \in L(G) \wedge w' \sigma \not\in J)) \\
	& (\mbox{because } (\forall v \in \Phi ^\pi (w) \cdots \Rightarrow (\exists v \in \Phi ^\pi (w) \cdots) \\
	\Leftrightarrow \
	& (\exists w \in \Sigma ^*)(\exists \sigma \in \Sigma) (w \sigma \in J \\
	& \wedge (\exists v \in \Phi ^\pi (w))(\exists w' \in \Sigma ^*) v \in \Phi ^\pi (w') \\
	& (w' \in J \wedge w' \sigma \in L(G) \wedge w' \sigma \not\in J)) \\
	\Leftrightarrow \
	& (\exists w, w' \in \Sigma ^*)(\exists \sigma \in \Sigma) (w \sigma \in J \\
	& (w' \in J \wedge w' \sigma \in L(G) \wedge w' \sigma \not\in J)) \\
	& \wedge \Phi ^\pi(w) \cap \Phi ^\pi(w') \not= \emptyset ) \\
	\Leftrightarrow \
	& (\exists w, w' \in  J)(\exists \sigma \in \Sigma) (w\sigma \in J \wedge w'\sigma \in L(G) \\
	& \wedge w'\sigma \notin J \wedge \Phi ^\pi(w) \cap \Phi ^\pi(w') \not= \emptyset ) \\
	\Leftrightarrow \
	& (\exists w, w' \in  J)(\exists \sigma \in \Sigma) \neg (w\sigma \in J \wedge w'\sigma \in L(G) \\
	& \wedge w'\sigma \notin J \Rightarrow \Phi ^\pi(w) \cap \Phi ^\pi(w')=\emptyset ) \\
	\Leftrightarrow \
	& \neg (\forall w, w' \in  J)(\forall \sigma \in \Sigma) (w\sigma \in J \wedge w'\sigma \in L(G) \\
	& \wedge w'\sigma \notin J \Rightarrow \Phi ^\pi(w) \cap \Phi ^\pi(w')=\emptyset ) . 
\end{align*}
This contradicts the assumption that $J$ is  CA-D-observable with respect to $L(G)$, $\Sigma _{o}$, $\Sigma ^a_o$ and $\Phi^\pi$

Suppose that $J$ is not CA-S-observable with respect to $L(G)$, $\Sigma _{o}$, $\Sigma ^a_o$ and $\Phi^\pi$. Then
\begin{align*}
	& \neg (\forall w \in J)(\forall \sigma \in \Sigma) ((w\sigma \in L(G) \wedge (\forall v \in \Phi ^\pi(w)) \\
	&  (\exists w' \in J) v \in  \Phi ^\pi(w') \wedge w'\sigma \in J) \Rightarrow w\sigma \in J) \\
	\Leftrightarrow \
	& (\exists w \in J)(\exists \sigma \in \Sigma) \neg ((w\sigma \in L(G) \wedge (\forall v \in \Phi ^\pi(w)) \\
	&  (\exists w' \in J) v \in  \Phi ^\pi(w') \wedge w'\sigma \in J) \Rightarrow w\sigma \in J) \\
	\Leftrightarrow \
	& (\exists w \in J)(\exists \sigma \in \Sigma) (w\sigma \in L(G) \wedge (\forall v \in \Phi ^\pi(w)) \\
	&  (\exists w' \in J) v \in  \Phi ^\pi(w') \wedge w'\sigma \in J) \wedge w\sigma \not\in J \\
	\Rightarrow \
	& (\exists w \in J)(\exists \sigma \in \Sigma) (w\sigma \in L(G) \wedge (\exists v \in \Phi ^\pi(w)) \\
	&  (\exists w' \in J) v \in  \Phi ^\pi(w') \wedge w'\sigma \in J) \wedge w\sigma \not\in J \\
	& (\mbox{because } (\forall v \in \Phi ^\pi(w)) \cdots \Rightarrow (\exists v \in \Phi ^\pi(w)) \cdots ) \\
	\Leftrightarrow \
	& (\exists w', w \in  J)(\exists \sigma \in \Sigma) (w'\sigma \in J \wedge w\sigma \in L(G) \\
	& \wedge w\sigma \notin J \wedge \Phi ^\pi(w') \cap \Phi ^\pi(w) \not= \emptyset ) \\
	\Leftrightarrow \
	& (\exists w', w \in  J)(\exists \sigma \in \Sigma) \neg (w'\sigma \in J \wedge w\sigma \in L(G) \\
	& \wedge w\sigma \notin J \Rightarrow \Phi ^\pi(w') \cap \Phi ^\pi(w)=\emptyset ) \\
	\Leftrightarrow \
	& \neg (\forall w', w \in  J)(\forall \sigma \in \Sigma) (w'\sigma \in J \wedge w\sigma \in L(G) \\
	& \wedge w\sigma \notin J \Rightarrow \Phi ^\pi(w') \cap \Phi ^\pi(w)=\emptyset ) .
\end{align*}
This contradicts the assumption that $J$ is  CA-D-observable with respect to $L(G)$, $\Sigma _{o}$, $\Sigma ^a_o$ and $\Phi^\pi$.

\hfill \bull

{Note that the converse of Proposition \ref{proposition1} is not true, that is, $J$ is CA-observable and CA-S-observable does not imply $J$ is CA-D-observable. This is because the two implications $\Rightarrow$ in the proof are strict, that is, $(\exists v \in \Phi ^\pi (w) \cdots \not\Rightarrow (\forall v \in \Phi ^\pi (w) \cdots)$.
Furthermore,
\begin{align*}
	& \neg (\neg CADO \Rightarrow \neg CAO) \wedge \neg (\neg CADO \Rightarrow \neg CASO) \\
	\Leftrightarrow \
	& \neg (CADO \vee \neg CAO) \wedge \neg (CADO \vee \neg CASO) \\
	\Leftrightarrow \
	& \neg CADO \wedge CAO \wedge \neg CADO \wedge CASO \\
	\Leftrightarrow \
	& \neg CADO \wedge CAO \wedge CASO \\
	\Leftrightarrow \
	& \neg (CADO \vee \neg (CAO \wedge CASO)) \\
	\Leftrightarrow \
	& \neg ((CAO \wedge CASO) \Rightarrow CADO) ,
\end{align*}
where CADO (CAO and CASO, respectively) stands for $J$ is CA-D-observable (CA-observable and CA-S-observable, respectively).
}

We assume that, without loss of generality, the specification language $J$ can be generated by a (trim) sub-automaton of $G$ as
\begin{align*}
	H = (X_H, \Sigma, \xi_H, x_0),
\end{align*}
where $X_H \subseteq X$ and $\xi_H = \xi|_{X_H \times \Sigma} \subseteq \xi$. 

{From a supervisor's point of view, after the observation of a string $v \in \Phi ^\pi (L(G))$, the set of all possible states that $G$ may be in is called state estimate, which is defined as}
\begin{align*}
	E^a(v)= & \{ x \in X: (\exists w \in L(G)) \\
	& v \in \Phi ^\pi (w) \wedge \xi (x_o, w)=x \}.
\end{align*}

Two state-estimate-based supervisors are then defined as follows.
\begin{equation} \label{SEBPNS}
	\begin{split}
		\cS_p (v) = & \{\sigma \in \Sigma : (\forall x \in E^a(v)) \\
		& x \in X_H \wedge \xi (x,\sigma) \in X \Rightarrow \xi (x,\sigma) \in X_H \} \\
		\cS_r (v) = & \{\sigma \in \Sigma : (\exists x \in E^a(v)) \\
		& x \in X_H \wedge \xi(x,\sigma) \in X_H \}.
	\end{split}
\end{equation}

Intuitively, $\cS_p$ takes a conservative attitude in the sense that it will enable $\sigma$ if nothing bad can happen. On the other hand, $\cS_r$ takes an optimistic attitude in the sense that it will enable $\sigma$ if something good may happen.   
The following proposition states that $\cS_p$ and $\cS_r$ are equivalent if $J$ is CA-D-observable.

\begin{proposition} \label{Proposition3} 
	
	Consider a discrete event system $G$ under cyber-attacks and a nonempty closed language $J \subseteq L(G)$ generated by $H$. If $J$ is CA-D-observable with respect to $L(G)$, $\Sigma _{o}$, $\Sigma ^a_o$ and $\Phi^\pi$, then for $v \in \Phi ^\pi (L(G))$, $\cS_p (v) = \cS_r (v)$.
	
\end{proposition}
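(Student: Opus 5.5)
The plan is to prove the two inclusions $\cS_r(v) \subseteq \cS_p(v)$ and $\cS_p(v) \subseteq \cS_r(v)$ separately for a fixed $v \in \Phi^\pi(L(G))$. Both directions work by unfolding (\ref{SEBPNS}) and translating state-level statements into string-level statements. Since $H$ is a sub-automaton of $G$ generating $J$, a string $w \in L(G)$ satisfies $w \in J$ iff $\xi(x_0,w) \in X_H$. Likewise, for $w \in J$, $w\sigma \in J$ iff $\xi(\xi(x_0,w),\sigma) \in X_H$. Moreover, $x \in E^a(v)$ means there is $w$ with $v \in \Phi^\pi(w)$ and $\xi(x_0,w)=x$.

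\emph{Inclusion $\cS_r(v) \subseteq \cS_p(v)$.} I would argue by contradiction. Take $\sigma \in \cS_r(v)$, so there is $x \in E^a(v) \cap X_H$ with $\xi(x,\sigma) \in X_H$. Pick a witness $w$ with $v \in \Phi^\pi(w)$ and $\xi(x_0,w)=x$; then $w \in J$ and $w\sigma \in J$. Suppose $\sigma \notin \cS_p(v)$. Then there is $x' \in E^a(v) \cap X_H$ with $\xi(x',\sigma)$ defined but not in $X_H$. A witness $w'$ for $x'$ gives $w' \in J$, $w'\sigma \in L(G)$, $w'\sigma \notin J$ and $v \in \Phi^\pi(w')$. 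Hence $v \in \Phi^\pi(w) \cap \Phi^\pi(w') \neq \emptyset$, which contradicts CA-D-observability of $J$ applied to the pair $w, w'$.

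\emph{Inclusion $\cS_p(v) \subseteq \cS_r(v)$.} Take $\sigma \in \cS_p(v)$; the task is to produce some $x \in E^a(v) \cap X_H$ with $\xi(x,\sigma) \in X_H$. The universal clause in $\cS_p(v)$ already guarantees that every $x \in E^a(v) \cap X_H$ at which $\sigma$ is defined in $G$ satisfies $\xi(x,\sigma) \in X_H$. So as soon as one such $x$ exists, $\sigma \in \cS_r(v)$ and this direction closes immediately. What remains is the existence of a state in $E^a(v) \cap X_H$ at which $\sigma$ is feasible.

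\emph{Main obstacle.} The obstacle is exactly this existence step, and CA-D-observability does not help with it: the condition only constrains pairs where some $w\sigma \in J$. Read literally, (\ref{SEBPNS}) puts $\sigma$ into $\cS_p(v)$ vacuously whenever no state of $E^a(v) \cap X_H$ has a $\sigma$-transition, for instance when $E^a(v) \cap X_H = \emptyset$. In that case $\cS_r(v)$ does not contain $\sigma$.

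I would first try to close this gap using that $J$ is nonempty and that $H$ and $G$ are trim. If that fails, I would make explicit the convention under which the equality is intended, namely that the two control sets are compared on events that some $x \in E^a(v) \cap X_H$ can actually execute. On that set the two inclusions above give $\cS_p(v)=\cS_r(v)$ directly. I would flag this point as the one place where the statement's literal wording needs checking against (\ref{SEBPNS}), since without such a convention the reverse inclusion fails for $\sigma$ that is not feasible at any state of $E^a(v) \cap X_H$.
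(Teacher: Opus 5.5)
Your diagnosis is correct, and the gap you flag is a defect in the statement, not in your reasoning. The paper uses the same two-part structure as you. It first proves the string-level equivalence (\ref{Prop3}) by splitting into the cases $A\wedge\neg B$ and $\neg A\wedge B$, then transfers it to states via $E^a$. Its case $\neg A\wedge B$ is exactly your argument for $\cS_r(v)\subseteq\cS_p(v)$, and it is sound.

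Its case $A\wedge\neg B$ is meant to give $\cS_p(v)\subseteq\cS_r(v)$. It ends with $(\forall w\in(\Phi^\pi)^{-1}(v))\,(w\in J\wedge w\sigma\in L(G)\Rightarrow w\sigma\in J\wedge w\sigma\notin J)$ and calls this a contradiction. It is not one: it only says that no $w\in(\Phi^\pi)^{-1}(v)$ has $w\in J$ and $w\sigma\in L(G)$, which is precisely your vacuous situation. That case also never uses CA-D-observability, so if it were valid it would give $\cS_p(v)\subseteq\cS_r(v)$ for every $J$.

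So drop the plan to close the gap using nonemptiness and trimness; it cannot be closed. Take $\Sigma=\Sigma_o=\{a\}$, no attackable transitions, and $G$ with the single transition $\xi(x_0,a)=x_1$ and $X_m=X$. Let $J=L(G)=\{\varepsilon,a\}$ with $X_H=X$. Everything is trim, and $J$ is CA-D-observable vacuously because $L(G)\setminus J=\emptyset$. Yet for $v=a$ we get $E^a(a)=\{x_1\}$, $\cS_p(a)=\{a\}$ and $\cS_r(a)=\emptyset$.

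The correct statement is your restricted one. Put $F(v)=\{\sigma\in\Sigma:(\exists x\in E^a(v)\cap X_H)\,\xi(x,\sigma)!\}$. Then $\cS_r(v)\subseteq F(v)$ always holds, and CA-D-observability gives $\cS_p(v)\cap F(v)=\cS_r(v)$; your two arguments prove exactly this. It is all that Theorem~\ref{Theorem1} needs, because that proof only consults the supervisor at $v\in\Phi^\pi(w)$ with $w\in J$ and $w\sigma\in L(G)$, and there $\sigma\in F(v)$.

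The same vacuity defect breaks the ``only if'' parts of Theorem~\ref{Theorem5} and Corollary~\ref{Corollary1}. In the example above, the condition of Corollary~\ref{Corollary1} fails at $y=\{x_1\}$, $\sigma=a$, although $J$ is CA-D-observable.

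One smaller point concerns your witness step ``$w\in J$''. The equivalence $w\in J\Leftrightarrow\xi(x_0,w)\in X_H$ does not follow merely from $H$ being a sub-automaton of $G$, because a path can leave $X_H$ and re-enter it. In the robot example, state 13 is reached from the illegal state 8. This is the standard strict-sub-automaton refinement, which the paper also uses silently in its final chain of equivalences. State it as an assumption rather than as a consequence.
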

\noindent {\em Proof}

We first prove by contradiction that, if $J$ is CA-D-observable with respect to $L(G)$, $\Sigma _{o}$, $\Sigma ^a_o$ and $\Phi^\pi$, then for all $v \in \Phi ^\pi (L(G))$ and $\sigma \in \Sigma$,
\begin{equation} \label{Prop3}
	\begin{split}
		& (\forall w \in (\Phi ^\pi)^{-1} (v)) w \in J \wedge w\sigma \in L(G) \Rightarrow w\sigma \in J \\
		\Leftrightarrow \
		& (\exists w' \in (\Phi ^\pi)^{-1} (v)) w' \in J \wedge w'\sigma \in J 
	\end{split}
\end{equation}
as follows.

Suppose Equation (\ref{Prop3}) is not true. Clearly, $A \not\Leftrightarrow B$ if and only if either (1) $A  \wedge \neg B$ or (2) $\neg A \wedge  B$. We show contradictions in both cases. In the first case, we have
\begin{align*}
	& (\forall w \in (\Phi ^\pi)^{-1} (v)) w \in J \wedge w\sigma \in L(G) \Rightarrow w\sigma \in J \\
	& \wedge \neg (\exists w' \in (\Phi ^\pi)^{-1} (v)) w' \in J \wedge w'\sigma \in J \\
	\Leftrightarrow \
	& (\forall w \in (\Phi ^\pi)^{-1} (v)) w \in J \wedge w\sigma \in L(G) \Rightarrow w\sigma \in J \\
	& \wedge (\forall w' \in (\Phi ^\pi)^{-1} (v)) w' \in J \Rightarrow w'\sigma \not\in J \\
	\Rightarrow \
	& (\forall w \in (\Phi ^\pi)^{-1} (v)) (w \in J \wedge w\sigma \in L(G) \Rightarrow w\sigma \in J) \\
	& \wedge (w \in J \Rightarrow w\sigma \not\in J) \\
	& (\mbox{let } w'=w) \\
	\Rightarrow \
	& (\forall w \in (\Phi ^\pi)^{-1} (v)) (w \in J \wedge w\sigma \in L(G) \\
	& \Rightarrow w\sigma \in J \wedge w\sigma \not\in J) .
\end{align*}
which is a contradiction. In the second case, we have
\begin{align*}
	& \neg (\forall w \in (\Phi ^\pi)^{-1} (v)) w \in J \wedge w\sigma \in L(G) \Rightarrow w\sigma \in J \\
	& \wedge (\exists w' \in (\Phi ^\pi)^{-1} (v)) w' \in J \wedge w'\sigma \in J \\
	\Leftrightarrow \
	& (\exists w \in (\Phi ^\pi)^{-1} (v)) w \in J \wedge w\sigma \in L(G) \wedge w\sigma \not\in J \\
	& \wedge (\exists w' \in (\Phi ^\pi)^{-1} (v)) w' \in J \wedge w'\sigma \in J \\
	\Rightarrow \
	& (\exists w', w \in  J)(\exists \sigma \in \Sigma) (w'\sigma \in J \wedge w\sigma \in L(G) \\
	& \wedge w\sigma \notin J \wedge \Phi ^\pi(w') \cap \Phi ^\pi(w) \not= \emptyset ) \\
	\Leftrightarrow \
	& \neg (\forall w', w \in  J)(\forall \sigma \in \Sigma) (w'\sigma \in J \wedge w\sigma \in L(G) \\
	& \wedge w\sigma \notin J \Rightarrow \Phi ^\pi(w') \cap \Phi ^\pi(w)=\emptyset ),
\end{align*}
which contradicts the assumption that $J$ is CA-D-observable with respect to $L(G)$, $\Sigma _{o}$, $\Sigma ^a_o$ and $\Phi^\pi$.

We can now prove $\cS_p (v) = \cS_r (v)$ as follows.
\begin{align*}
	& \sigma \in \cS_p (v) \\
	\Leftrightarrow \
	& (\forall x \in E^a(v)) x \in X_H \wedge \xi (x,\sigma) \in X \Rightarrow \xi (x,\sigma) \in X_H \\
	\Leftrightarrow \
	& (\forall w \in (\Phi ^\pi)^{-1} (v)) w \in J \wedge w\sigma \in L(G) \Rightarrow w\sigma \in J \\
	\Leftrightarrow \
	& (\exists w' \in (\Phi ^\pi)^{-1} (v)) w' \in J \wedge w'\sigma \in J \\
	\Leftrightarrow \
	& (\exists x \in E^a(v)) x \in X_H \wedge \xi(x,\sigma) \in X_H \\
	\Leftrightarrow \
	& \sigma \in \cS_r (v) .
\end{align*}

\hfill \bull

The following theorem gives a necessary and sufficient condition for the existence of a deterministic supervisor.

\begin{theorem} \label{Theorem1} 
	
	Consider a discrete event system $G$ under cyber-attacks. For a nonempty closed language $J \subseteq L(G)$ generated by $H$, there exists a deterministic supervisor $\cS$ such that $L_a(\cS^a/G)=L_r(\cS^a/G)=J$ 
	if and only if $J$ is CA-D-controllable with respect to $L(G)$, $\Sigma _{uc}$, and $\Sigma ^a_c$, and CA-D-observable with respect to $L(G)$, $\Sigma _{o}$, $\Sigma ^a_o$ and $\Phi^\pi$. 
	Furthermore, if a deterministic supervisor exists, then $\cS_p$ (= $\cS_r$) defined in  Equation (\ref{SEBPNS}) is such a supervisor.
	
\end{theorem}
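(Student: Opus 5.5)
We prove the two directions separately. Sufficiency is established by showing directly that $\cS_p$ ($=\cS_r$ by Proposition \ref{Proposition3}) achieves $L_a(\cS_p^a/G)=L_r(\cS_p^a/G)=J$. Necessity is established by contrapositive, exhibiting a strict gap between the large and small languages, or a violation of one of the required inclusions, whenever either condition fails.

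\emph{Sufficiency.} Assume $J$ is CA-D-controllable and CA-D-observable. Since $L_r\subseteq L_a$ always holds (the condition ``for all $v,\gamma$'' implies ``for some $v,\gamma$''), it suffices to show $L_a(\cS_p^a/G)\subseteq J\subseteq L_r(\cS_p^a/G)$. Both inclusions are proved by induction on string length; the base case $\varepsilon\in J$ holds because $J$ is nonempty and closed.

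For $L_a\subseteq J$, take $w\in J$ with $w\sigma\in L_a$, so that $w\sigma\in L(G)$, and split on $\sigma$:
\begin{itemize}
\item If $\sigma\in\Sigma_{uc}\cup\Sigma^a_c$, then the first conjunct of CA-D-controllability gives $w\sigma\in J$.
\item Otherwise $\sigma\in\Sigma_c-\Sigma^a_c$. Control attacks cannot add $\sigma$ to the command, so $\sigma\in\cS_p(v)$ for some $v\in\Phi^\pi(w)$. Since $w\in(\Phi^\pi)^{-1}(v)$, the first line of Equation (\ref{Prop3}) yields $w\sigma\in J$.
\end{itemize}

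For $J\subseteq L_r$, take $w\sigma\in J$ with $w\in L_r$. By the induction hypothesis $w\in J$. If $\sigma\in\Sigma_{uc}$ we are done. Otherwise $\sigma\notin\Sigma^a_c$, by the second conjunct $J\subseteq(\Sigma-\Sigma^a_c)^*$, so $\sigma$ survives every control attack. It remains to check that $\sigma\in\cS_r(v)$ for every $v\in\Phi^\pi(w)$, which holds with witness $w'=w$. Since $\cS_r=\cS_p$, we get $w\sigma\in L_r$.

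\emph{Necessity.} Suppose $\cS$ satisfies $L_a=L_r=J$.
\begin{itemize}
\item \textbf{Uncontrollable closure.} $J\Sigma_{uc}\cap L(G)\subseteq J$ follows directly from the definition of $L_r$, since $L_r=J$.
\item \textbf{No attackable controllable events in $J$.} If some $w\sigma\in J$ had $\sigma\in\Sigma^a_c$, the attacker could remove $\sigma$ from the command. Then the ``for all $\gamma$'' condition fails, so $w\sigma\notin L_r=J$, a contradiction.
\item \textbf{Closure under attackable controllable events.} For $w\in J$ and $\sigma\in\Sigma^a_c$ with $w\sigma\in L(G)$, the attacker can add $\sigma$ to the command, so $w\sigma\in L_a=J$. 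Together with the previous two items this gives CA-D-controllability.
\item \textbf{CA-D-observability.} Suppose $w,w'\in J$ with $w\sigma\in J$, $w'\sigma\in L(G)-J$, and some $v\in\Phi^\pi(w)\cap\Phi^\pi(w')$. Since $w\sigma\in J=L_r$ and $\sigma$ is necessarily controllable (else $w'\sigma\in J$ by the uncontrollable closure), every $\gamma\in\cS^a(v)$ contains $\sigma$. Because $v\in\Phi^\pi(w')$, this forces $w'\sigma\in L_a=J$, a contradiction.
\end{itemize}

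The main obstacle is making the inductions rigorous, in particular bridging the state-based definitions of $\cS_p$ and $\cS_r$ and the string-based Equation (\ref{Prop3}). This bridge relies on $H$ being a sub-automaton of $G$, so that $x=\xi(x_0,w)\in X_H$ if and only if $w\in J$. I would make this correspondence explicit first. I would also verify that $\cS_p$ is well-defined on all of $\Phi^\pi(L(G))$ and includes $\Sigma_{uc}$, which holds because uncontrollable events never leave $X_H$ by the controllability conjunct.
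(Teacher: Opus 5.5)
Your proof is correct and follows the paper's overall plan. For sufficiency you use two inductions giving $J\subseteq L_r(\cS_p^a/G)\subseteq L_a(\cS_p^a/G)\subseteq J$, and for necessity you argue by contradiction. Three steps are routed differently.

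(i) For $J\subseteq L_r$ you use the optimistic supervisor. There $\sigma\in\cS_r(v)$ is immediate with witness $w'=w$, and CA-D-observability enters only through Proposition~\ref{Proposition3}. The paper instead derives the string-level characterization Equation~(\ref{Equation4}) from CA-D-observability and checks $\sigma\in\cS_p(v)$ directly. The observability content is the same: the second case in the proof of Proposition~\ref{Proposition3} is the same contradiction as the proof of Equation~(\ref{Equation4}). Your arrangement separates the roles more clearly:
$\cS_p$ gives the upper bound from $J(\Sigma_{uc}\cup\Sigma^a_c)\cap L(G)\subseteq J$ alone;
$\cS_r$ gives the lower bound from $J\subseteq(\Sigma-\Sigma^a_c)^*$ alone;
and CA-D-observability is exactly what makes the two supervisors coincide.

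(ii) For necessity of CA-D-controllability, the paper cites the large-language and small-language existence theorems from earlier work and combines them via Proposition~\ref{proposition2}. You read each conjunct directly off the recursive definitions of $L_a$ and $L_r$, which makes the argument self-contained.

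(iii) For necessity of CA-D-observability, the paper splits on whether $\sigma\in\cS(v)$. You observe that $w\sigma\in L_r$ already forces $\sigma$ into every $\gamma\in\cS^a(v)$. So the paper's Case~2 cannot occur, and its Case~1 reasoning finishes the argument.

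Two minor points. First, in the lower-bound induction you state the hypothesis backwards: $w\in J$ comes from closure of $J$, and the induction hypothesis then gives $w\in L_r$. Second, like the paper, you tacitly assume $\Phi^\pi(w)\neq\emptyset$ for $w\in L(G)$, i.e., nonempty attack languages. This assumption is used in two places:
it is what lets ``for all $v$'' imply ``for some $v$'' in your claim $L_r\subseteq L_a$;
and the attacker adding or removing $\sigma$ in your controllability arguments only has an effect if some $v\in\Phi^\pi(w)$ exists.
The paper relies on the same fact when it pulls $\Sigma^a_c$ out of the quantifiers in Equations~(\ref{Equation10}) and~(\ref{MBcc}), so it is worth stating explicitly.
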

\noindent {\em Proof}

By Proposition \ref{proposition1}, $J$ is CA-D-observable implies that it is CA-observable. We first show that if $J$ is CA-observable with respect to $L(G)$, $\Sigma _{o}$, $\Sigma ^a_o$ and $\Phi^\pi$, then, for all $w \in \Sigma ^*$ and $\sigma \in \Sigma$,
\begin{equation} \label{Equation8}
	\begin{split}
		& w \sigma \in J \\
		\Leftrightarrow \
		& w \in J \wedge w \sigma \in L(G) \\
		& \wedge (\exists v \in \Phi ^\pi(w)) (\forall w' \in (\Phi ^\pi)^{-1} (v)) \\
		& (w' \in J \wedge w' \sigma \in L(G) \Rightarrow w' \sigma \in J).
	\end{split}
\end{equation}

The proof is as follows.

($\Rightarrow$) By the definition of CA-observability, Equation
(\ref{CAO}), we have
\begin{align*}
	& w \sigma \in J \\
	\Rightarrow \ & (\exists v \in \Phi ^\pi (w))(\forall w' \in (\Phi ^\pi)^{-1} (v)) \\
	& (w' \in J \wedge w' \sigma \in L(G) \Rightarrow w' \sigma \in J) \\
	\Rightarrow \ & w \in J \wedge w \sigma \in L(G) \wedge (\exists v \in \Phi ^\pi (w)) \\
	& (\forall w' \in (\Phi ^\pi)^{-1} (v)) (w' \in J \wedge w' \sigma \in L(G) \Rightarrow w' \sigma \in J) \\
	& (\mbox{since } w \sigma \in J \Rightarrow w \in J \wedge w \sigma \in L(G))
\end{align*}

($\Leftarrow$) We have
\begin{align*}
	& w \in J \wedge w \sigma \in L(G)\wedge (\exists v \in \Phi ^\pi (w)) \\
	& (\forall w' \in (\Phi ^\pi)^{-1} (v)) (w' \in J \wedge w' \sigma \in L(G) \Rightarrow w' \sigma \in J) \\
	\Rightarrow \ & w \in J \wedge w \sigma \in L(G) \wedge (w \in J \wedge w \sigma \in L(G) \Rightarrow w \sigma \in J) \\
	& (\mbox{take any $v \in \Phi ^\pi (w)$ and let $w'=w$}) \\
	\Rightarrow \ & w \sigma \in J .
\end{align*} 

We now show that if $J$ is CA-D-observable with respect to $L(G)$, $\Sigma _{o}$, $\Sigma ^a_o$ and $\Phi^\pi$, then, for all $w \in \Sigma ^*$ and $\sigma \in \Sigma$,
\begin{equation} \label{Equation4}
	\begin{split}
		& w \sigma \in J \\
		\Leftrightarrow \
		& ( w \in J \wedge w \sigma \in L(G) \wedge (\forall v \in \Phi ^\pi(w)) \\
		& (\forall w' \in (\Phi ^\pi)^{-1} (v)) ( w' \in J \wedge w' \sigma \in L(G) \\
		& \Rightarrow w' \sigma \in J))  .
	\end{split}
\end{equation}

The proof is as follows.

($\Leftarrow$) This implication is clear by taking $w'=w$.

($\Rightarrow$) We prove this implication by contradiction as
follows. If $\Rightarrow$ is not true, then
\begin{align*}
	& (\exists w \in \Sigma ^*)(\exists \sigma \in \Sigma) w \sigma \in J \wedge \neg (\forall v \in \Phi ^\pi(w)) \\
	& (\forall w' \in (\Phi ^\pi)^{-1} (v))  (w' \in J \wedge w' \sigma \in L(G) \Rightarrow w' \sigma \in J) \\
	\Rightarrow \
	& (\exists w \in \Sigma ^*)(\exists \sigma \in \Sigma) w \sigma \in J \wedge (\exists v \in \Phi ^\pi(w)) \\
	& (\exists w' \in (\Phi ^\pi)^{-1} (v)) w' \in J \wedge w' \sigma \in L(G) \wedge w' \sigma \not \in J \\
	\Rightarrow \
	& (\exists w \in \Sigma ^*)(\exists \sigma \in \Sigma) w \sigma \in J \wedge (\exists w' \in \Sigma^*) \\
	& (\exists v \in \Phi ^\pi(w) \cap \Phi ^\pi(w')) w' \in J \wedge w' \sigma \in L(G) \wedge w' \sigma \not \in J \\
	\Rightarrow \
	& (\exists w, w' \in J) (\exists \sigma \in \Sigma) w \sigma \in J \wedge w' \sigma \in L(G) \wedge w' \sigma \not \in J  \\
	& \wedge \Phi ^\pi(w) \cap \Phi ^\pi(w') \not= \emptyset,
\end{align*}
which contradicts the CA-D-observability of $J$.

Note that
\begin{equation} \label{gamma}
	\begin{split}
		& (\exists \gamma \in \cS ^a(v)) \sigma \in \gamma \\
		\Leftrightarrow \
		& (\exists \gamma \in \Delta (\cS (v))) \sigma \in \gamma \\
		\Leftrightarrow \
		& (\exists \gamma ', \gamma '' \in \Sigma _c^a) \sigma \in (\cS (v)-\gamma ') \cup \gamma '' \\
		\Leftrightarrow \
		& \sigma \in \cS (v) \cup \Sigma _c^a \\
		& (\mbox{since } \gamma '= \emptyset \wedge \gamma '' = \Sigma _c^a \mbox{ covers all cases}) .
	\end{split}
\end{equation}

Therefore, 
\begin{equation} \label{Equation10}
	\begin{split}
		&  w \sigma\in L_a (\cS ^a/G) \\
		\Leftrightarrow \
		& w \in L_a (\cS ^a/G) \wedge w \sigma \in L(G) \wedge (\sigma \in \Sigma _{uc}  \\
		& \vee (\exists v \in \Phi ^\pi (w))(\exists \gamma \in \cS ^a(v)) \sigma \in \gamma) \\
		& (\mbox{by Equation (\ref{LargeL})}) \\
		\Leftrightarrow \
		& w \in L_a (\cS ^a/G) \wedge w \sigma \in L(G) \wedge (\sigma \in \Sigma _{uc} \\
		& \vee (\exists v \in \Phi ^\pi (w)) \sigma \in \cS (v) \cup \Sigma _c^a ) \\
		& (\mbox{by Equation (\ref{gamma})}) \\
		\Leftrightarrow \
		& w \in L_a (\cS ^a/G) \wedge w \sigma \in L(G) \wedge (\sigma \in \Sigma _{uc} \cup \Sigma _c^a \\
		& \vee (\exists v \in \Phi ^\pi (w)) \sigma \in \cS (v)) .
	\end{split}
\end{equation}

Note also that
\begin{equation} \label{allgamma}
	\begin{split}
		& (\forall \gamma \in  \cS^a(v))\sigma \in \gamma  \\
		\Leftrightarrow \
		& (\forall \gamma \in \Delta (\cS(v)))\sigma \in \gamma \\
		\Leftrightarrow \
		& (\forall \gamma', \gamma'' \subseteq \Sigma^a_c) \sigma \in((\cS(v)-\gamma')\cup \gamma'') \\
		\Leftrightarrow \
		& \sigma \in \cS(v) - \Sigma^a_c \\
		& (\mbox{since } \gamma' = \Sigma^a_c \wedge \gamma'' = \emptyset \mbox{ covers all cases}). 
	\end{split}
\end{equation}

Furthermore,
\begin{equation} \label{MBcc}
	\begin{split}
		& w \sigma \in L_r(\cS^a/G) \\
		\Leftrightarrow \
		& w\in L_r(\cS^a/G)\wedge w \sigma \in L(G) \wedge \sigma \notin \Sigma^a_c \\
		& \wedge (\sigma \in \Sigma _{uc} \vee (\forall v \in  \Phi ^\pi(w)) \sigma \in \cS(v) ) ,
	\end{split}
\end{equation}
which is derived as follows.
\begin{align*}
	& w \sigma \in L_r(\cS^a/G) \\
	\Leftrightarrow \
	& w\in L_r(\cS^a/G)\wedge w \sigma \in L(G) \wedge (\sigma \in \Sigma_{uc}  \\
	& \vee  (\forall v \in   \Phi ^\pi(w))(\forall \gamma \in  \cS^a(v))\sigma \in \gamma) \\
	& (\mbox{by Equation (\ref{SmallL})}) \\
	\Leftrightarrow \
	& w\in L_r(\cS^a/G)\wedge w \sigma \in L(G) \wedge (\sigma \in \Sigma _{uc} \\
	& \vee (\forall v \in \Phi ^\pi(w)) \sigma \in \cS(v) - \Sigma^a_c)  \\
	& (\mbox{by Equation (\ref{allgamma})}) \\
	\Leftrightarrow \
	& w\in L_r(\cS^a/G)\wedge w \sigma \in L(G) \wedge (\sigma \in \Sigma _{uc}\\
	& \vee (\forall v \in \Phi ^\pi(w)) (\sigma \in \cS(v) \wedge \sigma \notin \Sigma^a_c)) \\
	\Leftrightarrow \
	& w\in L_r(\cS^a/G)\wedge w \sigma \in L(G) \wedge (\sigma \in \Sigma _{uc} \\
	& \vee (\sigma \notin \Sigma^a_c \wedge (\forall v \in \Phi ^\pi(w)) \sigma \in \cS(v) )) \\
	\Leftrightarrow \
	& w\in L_r(\cS^a/G)\wedge w \sigma \in L(G) \\
	& \wedge ((\sigma \in \Sigma _{uc} \vee \sigma \notin \Sigma^a_c) \\
	& \wedge (\sigma \in \Sigma _{uc} \vee (\forall v \in \Phi ^\pi(w)) \sigma \in \cS(v) )) \\
	\Leftrightarrow \
	& w\in L_r(\cS^a/G)\wedge w \sigma \in L(G) \wedge \sigma \notin \Sigma^a_c \\
	& \wedge (\sigma \in \Sigma _{uc} \vee (\forall v \in  \Phi ^\pi(w)) \sigma \in \cS(v) ) \\
	& (\mbox{because } \sigma \in \Sigma _{uc} \Rightarrow \sigma \notin \Sigma^a_c ).	
\end{align*}

From Equation (\ref{MBcc}), it is clear that
\begin{align*}
	& w \sigma \in L_r(\cS^a/G) \\
	\Leftrightarrow \
	& w\in L_r(\cS^a/G)\wedge w \sigma \in L(G) \wedge \sigma \notin \Sigma^a_c \\
	& \wedge (\sigma \in \Sigma _{uc} \vee (\forall v \in  \Phi ^\pi(w)) \sigma \in \cS(v) ) \\
	\Rightarrow \
	& (\sigma \in \Sigma _{uc} \vee (\forall v \in  \Phi ^\pi(w)) \sigma \in \cS(v) ) .
\end{align*}

Therefore,
\begin{equation} \label{MBccneg}
	\begin{split}
		& \neg (\sigma \in \Sigma _{uc} \vee (\forall v \in  \Phi ^\pi(w)) \sigma \in \cS(v) ) \\
		\Rightarrow \
		& w \sigma \not\in L_r(\cS^a/G) .
	\end{split}
\end{equation}


(IF) We now prove the ``if'' part. The assumption is that language $J$ is both CA-D-controllable with respect to $L(G)$, $\Sigma _{uc}$, and $\Sigma ^a_c$, and CA-D-observable with respect to $L(G)$, $\Sigma _{o}$, $\Sigma ^a_o$, and $\Phi^\pi$.

According to Proposition \ref{proposition2}, this implies that $J$ is: (1) CA-controllable with respect to $L(G)$, $\Sigma _{uc}$, and $\Sigma ^a_c$, (2) CA-S-controllable with respect to $L(G)$, $\Sigma _{uc}$, and $\Sigma ^a_c$, and (3) CA-D-observable with respect to $L(G)$, $\Sigma _{o}$, $\Sigma ^a_o$, and $\Phi^\pi$. 
We will demonstrate that the supervisor $\cS_p$, defined in Equation (\ref{SEBPNS}), is a deterministic supervisor satisfying $L_a(\cS_p^a/G) = L_r(\cS_p^a/G) = J$. Specifically,
$$
J \subseteq L_r(\cS_p^a/G) \subseteq L_a(\cS_p^a/G)
\subseteq J.
$$

Since $L_r(\cS_p^a/G) \subseteq L_a(\cS_p^a/G)$ by the definitions of $L_r(\cS_p^a/G)$ and $L_a(\cS_p^a/G)$, we only need to prove 
\begin{align*}
	(A) \ \ & J \subseteq 	L_r(\cS_p^a/G) \\
	(B) \ \ & L_a(\cS_p^a/G) \subseteq J.
\end{align*}

We prove (A) $w \in J \Rightarrow w \in L_r(\cS_p^a/G)$ by
induction on the length $|w|$ of $w \in \Sigma ^*$ as follows.

{\em Base:} Since $J$ is nonempty and closed, $\varepsilon \in J$. By
definition, $\varepsilon \in L_r(\cS_p^a/G)$. Therefore, for
$|w| = 0$, that is, $w=\varepsilon$, we have
$$
w \in J \Rightarrow w \in L_r(\cS_p^a/G).
$$

{\em Induction Hypothesis:} Assume that for all $w \in \Sigma^*$,
$|w| \leq n$,
$$
w \in J \Rightarrow w \in L_r(\cS_p^a/G).
$$

{\em Induction Step:} We show that for all $w \in \Sigma^*$,
$\sigma \in \Sigma$, $|w\sigma| = n+1$,
$$
w \sigma \in J \Rightarrow w \sigma \in L_r(\cS_p^a/G).
$$
as follows.
\begin{align*}
	& w \sigma \in J \\
	\Rightarrow \
	& w \sigma \in J \wedge \sigma \notin \Sigma^a_c \\
	& (\mbox{because $J$ is CA-S-controllable }) \\
	\Rightarrow \
	& w \in J \wedge w \sigma \in L(G) \wedge w \sigma \in J \wedge \sigma \notin \Sigma^a_c \\
	\Rightarrow  \
	& w \in L_r(\cS_p^a/G) \wedge w \sigma \in L(G) \wedge w \sigma \in J \wedge \sigma \notin \Sigma^a_c \\
	& (\mbox{by Induction Hypothesis}) \\
	\Rightarrow \
	& w \in L_r(\cS_p^a/G) \wedge w \sigma \in L(G) \wedge \sigma \notin \Sigma^a_c \\
	& \wedge (\forall v \in \Phi ^\pi(w))(\forall w' \in (\Phi ^\pi)^{-1} (v)) \\
	& (w' \in J \wedge w' \sigma \in L(G)\Rightarrow w' \sigma \in J) \\
	& (\mbox{by Equation (\ref{Equation4})}) \\
	\Rightarrow \ 
	& w \in L_r(\cS_p^a/G) \wedge w \sigma \in L(G) \wedge \sigma \notin \Sigma^a_c \\
	& \wedge (\forall v \in \Phi ^\pi(w)) (\forall x \in E^a(v)) \\
	& (x \in X_H \wedge \xi (x,\sigma) \in X \Rightarrow \xi (x,\sigma) \in X_H) \\
	& (\mbox{by the definition of } E^a ) \\
	\Rightarrow \ 
	& w \in L_r(\cS_p^a/G) \wedge w \sigma \in L(G) \wedge \sigma \notin \Sigma^a_c \\
	& \wedge (\forall v \in \Phi ^\pi(w)) \sigma \in \cS_p (v) \\
	& (\mbox{by the definition of } \cS_p  ) \\
	\Rightarrow \
	& w\in L_r(\cS_p^a/G)\wedge w \sigma \in L(G) \wedge \sigma \notin \Sigma^a_c \\
	& \wedge (\sigma \in \Sigma _{uc} \vee (\forall v \in   \Phi ^\pi(w)) \sigma \in \cS_p(v) ) \\
	\Rightarrow  & w \sigma \in L_r(\cS_p^a/G) \\
	& (\mbox{by Equation (\ref{MBcc})}) .
\end{align*}

We prove (B) $w \in L_a(\cS_p^a/G) \Rightarrow w \in J$ by
induction on the length $|w|$ of $w \in \Sigma ^*$ as follows.

{\em Base:} Since $J$ is nonempty and closed, $\varepsilon \in J$. By
definition, $\varepsilon \in L_a(\cS_p^a/G)$. Therefore, for
$|w| = 0$, that is, $w=\varepsilon$, we have

$$
w \in L_a(\cS_p^a/G) \Rightarrow w \in J.
$$

{\em Induction Hypothesis:} Assume that for all $w \in \Sigma^*$,
$|w| \leq n$,
$$
w \in L_a(\cS_p^a/G) \Rightarrow w \in J.
$$

{\em Induction Step:} We show that for all $w \in \Sigma^*$,
$\sigma \in \Sigma$, $|w\sigma| = n+1$,
$$
w \sigma \in L_a(\cS_p^a/G) \Rightarrow w \sigma \in J
$$
as follows.
\begin{align*}
	& w \sigma \in L_a(\cS_p^a/G) \\
	\Rightarrow \ 
	& w \in L_a (\cS_p^a/G) \wedge w \sigma \in L(G) \wedge (\sigma \in \Sigma _{uc} \cup \Sigma _c^a \\
	& \vee (\exists v \in \Phi ^\pi (w)) \sigma \in \cS_p (v)) \\
	& (\mbox{by Equation (\ref{Equation10})}) \\
	\Rightarrow \ 
	& w \in J \wedge w \sigma \in L(G) \wedge (\sigma \in \Sigma _{uc} \cup \Sigma _c^a \\
	& \vee (\exists v \in \Phi ^\pi (w)) \sigma \in \cS_p (v)) \\
	& (\mbox{by Induction Hypothesis}) \\
	\Rightarrow \
	& (w \in J \wedge w \sigma \in L(G) \wedge \sigma \in \Sigma _{uc} \cup \Sigma _c^a) \vee ( w \in J \\
	& \wedge w \sigma \in L(G) \wedge (\exists v \in \Phi ^\pi(w))
	\sigma \in \cS_p^a (v)) \\
	\Rightarrow \ & w \sigma \in J \vee ( w \in J \wedge w \sigma
	\in L(G) \\
	& \wedge (\exists v \in \Phi ^\pi(w)) \sigma \in \cS_p
	^a (v)) \\
	& (\mbox{because $J$ is CA-controllable }) \\
	\Rightarrow \
	& w \sigma \in J \vee ( w \in J \wedge w \sigma
	\in L(G) \\
	& \wedge (\exists v \in \Phi ^\pi(w))(\forall x \in
	E^a(v)) \\
	& (x \in X_H \wedge \xi (x,\sigma) \in X \Rightarrow \xi
	(x,\sigma) \in X_H) \\
	& (\mbox{by the definition of } \cS_p^a ) \\
	\Rightarrow \
	& w \sigma \in J \vee ( w \in J \wedge w \sigma
	\in L(G) \\
	& \wedge (\exists v \in \Phi ^\pi(w))(\forall w' \in
	(\Phi ^\pi)^{-1} (v)) \\
	& (w' \in J \wedge w' \sigma \in L(G) \Rightarrow w' \sigma \in J) \\
	& (\mbox{by the definition of } E^a ) \\
	\Rightarrow \
	& w \sigma \in J \vee w \sigma \in J \\
	& (\mbox{by Equation (\ref{Equation8})}) \\
	\Rightarrow \
	& w \sigma \in J .
\end{align*}

(ONLY IF) Let us prove the ``only if'' part. Assume that there exists a deterministic supervisor $\cS$ such that $L_a(\cS^a/G)=L_r(\cS^a/G)=J$. Since $L_a(\cS^a/G)=J$, by Theorem 3 of \cite{zheng2024modeling}, $J$ is CA-controllable with respect to $L(G)$, $\Sigma _{uc}$, and $\Sigma ^a_c$. 
Since $L_r(\cS^a/G)=J$, by Theorem 2 of \cite{wang2025supervisory}, $J$ is CA-S-controllable with respect to $L(G)$, $\Sigma _{uc}$, and $\Sigma ^a_c$. 
Therefore, by Proposition \ref{proposition2}, $J$ is CA-D-controllable with respect to $L(G)$, $\Sigma _{uc}$, and $\Sigma ^a_c$. 

Next, we prove that $J$ is CA-D-observable with respect to $L(G)$, $\Sigma _{o}$, $\Sigma ^a_o$ and $\Phi^\pi$ by contradiction. Suppose that $J$ is not CA-D-observable, then
\begin{align*}
	& \neg (\forall w, w' \in  J)(\forall \sigma \in \Sigma) (w\sigma \in J \wedge w'\sigma \in L(G) \\
	& \wedge w'\sigma \notin J \Rightarrow \Phi ^\pi(w) \cap \Phi ^\pi(w')=\emptyset ) \\
	\Leftrightarrow \
	& (\exists w, w' \in  J)(\exists \sigma \in \Sigma) (w\sigma \in J \wedge w'\sigma \in L(G) \\
	& \wedge w'\sigma \notin J \wedge \Phi ^\pi(w) \cap \Phi ^\pi(w') \not= \emptyset ) \\
	\Leftrightarrow \
	& (\exists w, w' \in  J)(\exists \sigma \in \Sigma) (w\sigma \in J \wedge w'\sigma \in L(G) \\
	& \wedge w'\sigma \notin J \wedge (\exists v \in \Sigma^*) v \in \Phi ^\pi(w) \wedge v \in \Phi ^\pi(w') ) .
\end{align*}
Since $J$ is CA-D-controllable,
\begin{align*}
	& w' \in  J \wedge w'\sigma \in L(G) \wedge w'\sigma \notin J \Rightarrow \sigma \not\in \Sigma _{uc} \cup \Sigma ^a_c .
\end{align*}
Consider the following two possible cases. 

\emph{Case 1.} $\sigma \in \cS(v)$: In this case, 
\begin{align*}
	& w'\sigma \notin J \wedge w' \in  J \wedge w'\sigma \in L(G) \\
	& \wedge (\exists v \in \Phi ^\pi(w')) \sigma \in \cS(v) \\
	\Leftrightarrow \
	& w'\sigma \notin J \wedge w' \in L_a (\cS ^a/G) \wedge w'\sigma \in L(G) \\
	& \wedge (\exists v \in \Phi ^\pi(w')) \sigma \in \cS(v) \\
	& (\mbox{because } L_a(\cS^a/G)=J ) \\
	\Rightarrow \
	& w'\sigma \notin J \wedge w' \in L_a (\cS ^a/G) \wedge w' \sigma \in L(G) \\
	& \wedge (\sigma \in \Sigma _{uc} \cup \Sigma _c^a \vee (\exists v \in \Phi ^\pi (w')) \sigma \in \cS (v)) \\
	\Leftrightarrow \
	&  w'\sigma \notin J \wedge w' \sigma\in L_a (\cS ^a/G) \\
	& (\mbox{by Equation (\ref{Equation10})}) ,
\end{align*}
which contradicts the assumption that $L_a(\cS^a/G)=J$.

\emph{Case 2.} $\sigma \not\in \cS(v)$: In this case, 
\begin{align*}
	& w\sigma \in J \wedge \sigma \not\in \Sigma _{uc} \cup \Sigma ^a_c \wedge (\exists v \in \Phi ^\pi(w)) \sigma \not\in \cS(v) \\
	\Rightarrow \
	& w\sigma \in J \wedge \sigma \not\in \Sigma _{uc} \wedge (\exists v \in \Phi ^\pi(w)) \sigma \not\in \cS(v) \\
	\Rightarrow \
	& w\sigma \in J \wedge \neg (\sigma \in \Sigma _{uc} \vee (\forall v \in  \Phi ^\pi(w)) \sigma \in \cS(v) ) \\
	\Rightarrow \
	& w\sigma \in J \wedge w \sigma \not\in L_r(\cS^a/G) \\
	& (\mbox{by Equation (\ref{MBccneg})}),
\end{align*}
which contradicts $J=L_r(\cS^a/G)$.

\hfill \bull

{Note that the existence condition for the deterministic control cannot be obtained by taking the conjunction of the existence conditions for the large-language control and the small-language control. Even if the conjunction is satisfied, the supervisor $S_1$ achieving $L_a (S_1^a/G)=J$ and the supervisor $S_2$ achieving $L_r (S_2^a/G)=J$ may not be the same. Therefore, a stronger condition is needed to ensure the existence of the deterministic control. This is reflected in the fact that CA-D-observability is strictly stronger than the conjunction of CA-observability and CA-S-observability.}

{Note further that, since CA-D-controllability and CA-S-controllability require that $J \subseteq (\Sigma - \Sigma^a_c)^*$, if $J$ contains all events in $\Sigma$, then it is necessary that $\Sigma^a_c = \emptyset$. In other words, for deterministic control (and small-language control) to exist, it is required that attackable controllable events cannot appear in $J$. This is restrictive, but necessary, because CA-D-controllability (and CA-S-controllability), which is a part of the necessary (and sufficient) condition, requires it. 
}

\section{Nonblocking Supervisory Control}
\label{Section4}

In conventional supervisory control, the {\em language marked} by closed-loop system $\cS/G$ is defined as
$$
L_m(\cS/G) = L(\cS/G) \cap L_m(G). 
$$

We say that a supervisor $S$ is \emph{nonblocking} if 
$$
\overline{L_m(\cS/G)} = L(\cS/G). 
$$

Given a non-closed language $K \subseteq L_m(G)$, we say $K$ is $L_m(G)$-closed if
$$
K = \overline{K} \cap L_m(G). 
$$
If $K$ is not closed, we say $K$ is controllable, observable, CA-, CA-S-, CA-D-controllable, and CA-, CA-S, CA-D-observable, if $\overline{K}$ is controllable, observable, CA-, CA-S-, CA-D-controllable, and CA-, CA-S, CA-D-observable, respectively.

For conventional supervisory control, it is proved in \cite{lin1988observability} that there exists a nonblocking supervisor such that $L_m(\cS/G)=K$, if and only if $K$ is $L_m(G)$-closed, controllable with respect to $L(G)$ and $\Sigma_{uc}$, and observable with respect to $L(G)$ and $\Sigma_o$.

For supervisory control under cyber-attacks, the definition of nonblocking is more complex, because $L(\cS/G)$ is nondeterministic. Furthermore, $L_m(G)$-closeness is not enough to ensure nonblocking. Some additional conditions are needed. Define the marked large language as
$$
L_{am}(\cS^a/G) = L_a(\cS^a/G) \cap L_m(G).
$$

A supervisor $\cS :\Phi ^\pi (L(G)) \rightarrow \Gamma$ under cyber-attacks is nonblocking if
\begin{enumerate}
	\item
	$\overline{L_{am}(\cS^a/G)} =L_a(\cS^a/G)$
	\item
	$(\forall w \in L_a(\cS^a/G)) (\forall \sigma \in \Sigma_c)w \sigma \in L_a(\cS^a/G)$ \\
	$\Rightarrow (\forall v \in \Phi ^\pi (w))(\forall \gamma \in \cS ^a(v)) \sigma \in \gamma$.
\end{enumerate}
The second condition above says that if $w \sigma$ is allowed in $L_a(\cS^a/G)$, then $\sigma$ must be enabled by all possible controls after $w$. This is needed because, due to nondeterminism in control, although $w \sigma \in L_a(\cS^a/G)$, $\sigma$ may still be blocked by some controls.

Let us prove the following theorem.

\begin{theorem} (Second Condition for Nonblocking)
	\label{theorem2}
	
	Consider a discrete event system $G$ under cyber-attacks. For any supervisor $\cS :\Phi ^\pi (L(G)) \rightarrow \Gamma$,
	\begin{equation} \label{E8}
		\begin{split}
			& (\forall w \in L_a(\cS^a/G)) (\forall \sigma \in \Sigma_c)w \sigma \in L_a(\cS^a/G) \\
			& \Rightarrow (\forall v \in \Phi ^\pi (w))(\forall \gamma \in \cS ^a(v)) \sigma \in \gamma
		\end{split}
	\end{equation}
	if and only if
	\begin{equation} \label{E9}
		\begin{split}
			& L_a(\cS^a/G) = L_r(\cS^a/G) .
		\end{split}
	\end{equation}
	
\end{theorem}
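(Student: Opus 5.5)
The plan is to prove the two directions separately. Both rest on the recursive definitions (\ref{LargeL}) and (\ref{SmallL}) and on the reformulations already obtained in the proof of Theorem \ref{Theorem1}. In particular, by Equation (\ref{allgamma}), the right-hand side of (\ref{E8}) is equivalent to $(\forall v \in \Phi^\pi(w))\, \sigma \in \cS(v) - \Sigma^a_c$, and Equation (\ref{MBcc}) characterizes membership in $L_r(\cS^a/G)$.

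\emph{(\ref{E9}) $\Rightarrow$ (\ref{E8}).} Take $w \in L_a(\cS^a/G)$ and $\sigma \in \Sigma_c$ with $w\sigma \in L_a(\cS^a/G)$. By (\ref{E9}) we have $w\sigma \in L_r(\cS^a/G)$, so Equation (\ref{MBcc}) gives $\sigma \notin \Sigma^a_c$ and $(\sigma \in \Sigma_{uc} \vee (\forall v \in \Phi^\pi(w))\, \sigma \in \cS(v))$. Since $\sigma \in \Sigma_c$, the second disjunct must hold. Combined with $\sigma \notin \Sigma^a_c$, this yields $(\forall v \in \Phi^\pi(w))\, \sigma \in \cS(v) - \Sigma^a_c$. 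Equation (\ref{allgamma}) then turns this into $(\forall v \in \Phi^\pi(w))(\forall \gamma \in \cS^a(v))\, \sigma \in \gamma$, which is (\ref{E8}).

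\emph{(\ref{E8}) $\Rightarrow$ (\ref{E9}).} The inclusion $L_r(\cS^a/G) \subseteq L_a(\cS^a/G)$ holds in general, as already used in the proof of Theorem \ref{Theorem1}. It follows by a short induction: the universally quantified condition in (\ref{SmallL}) implies the existential one in (\ref{LargeL}), provided $\Phi^\pi(w) \neq \emptyset$ for $w \in L(G)$ (i.e., attack languages are nonempty) and $\cS^a(v) \neq \emptyset$, which is clear since $\cS(v) \in \Delta(\cS(v))$. For the reverse inclusion $L_a(\cS^a/G) \subseteq L_r(\cS^a/G)$ I would induct on $|w|$. The base case $\varepsilon$ lies in both languages by definition. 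For the step, let $w\sigma \in L_a(\cS^a/G)$. Then $w \in L_a(\cS^a/G)$ by closedness, hence $w \in L_r(\cS^a/G)$ by the induction hypothesis, and also $w\sigma \in L(G)$. There are two cases.
\begin{enumerate}
\item If $\sigma \in \Sigma_{uc}$, then $w\sigma \in L_r(\cS^a/G)$ follows directly from (\ref{SmallL}).
\item If $\sigma \in \Sigma_c$, then (\ref{E8}) applies to $w \in L_a(\cS^a/G)$ and gives $(\forall v \in \Phi^\pi(w))(\forall \gamma \in \cS^a(v))\, \sigma \in \gamma$, which is exactly the remaining clause of (\ref{SmallL}). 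Hence $w\sigma \in L_r(\cS^a/G)$.
\end{enumerate}

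The argument is essentially definitional, so I expect no serious obstacle. The only delicate point is the inclusion $L_r(\cS^a/G) \subseteq L_a(\cS^a/G)$. If $\Phi^\pi(w)$ could be empty for some $w \in L(G)$, the universal quantifier in (\ref{SmallL}) would be vacuously true while the existential one in (\ref{LargeL}) fails, and the inclusion could break. I would state explicitly that this inclusion is used exactly as in the proof of Theorem \ref{Theorem1}. It relies on each attack language $A_{tr}$ being nonempty, so that $\Theta^\pi(w)$ and hence $\Phi^\pi(w)$ are nonempty.
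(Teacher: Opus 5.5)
Your proof is correct and follows essentially the paper's argument. The direction (\ref{E9})$\Rightarrow$(\ref{E8}) just reads the universal enabling condition off (\ref{SmallL}) using $\sigma\notin\Sigma_{uc}$, so your detour through (\ref{MBcc}) and (\ref{allgamma}) is unnecessary. Your induction for $L_a(\cS^a/G)\subseteq L_r(\cS^a/G)$ is the explicit form of the paper's contradiction step, which takes $w\in L_a(\cS^a/G)\cap L_r(\cS^a/G)$ and $\sigma\in\Sigma_c$ with $w\sigma\in L_a(\cS^a/G)-L_r(\cS^a/G)$, implicitly a shortest counterexample. Your remark that $L_r(\cS^a/G)\subseteq L_a(\cS^a/G)$ needs $\Phi^\pi(w)\neq\emptyset$, i.e., nonempty attack languages, is right; the paper uses this assumption without stating it.
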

\noindent {\em Proof} 

(IF) Let us prove the ``if'' part by contradiction. Suppose Equation (\ref{E8}) is not true. Then
\begin{align*}
	& \neg (\forall w \in L_a(\cS^a/G)) (\forall \sigma \in \Sigma_c)w \sigma \in L_a(\cS^a/G) \\
	& \Rightarrow (\forall v \in \Phi ^\pi (w))(\forall \gamma \in \cS ^a(v)) \sigma \in \gamma \\
	\Leftrightarrow \
	& (\exists w \in L_a(\cS^a/G)) (\exists \sigma \in \Sigma_c)w \sigma \in L_a(\cS^a/G) \\
	& \wedge \neg (\forall v \in \Phi ^\pi (w))(\forall \gamma \in \cS ^a(v)) \sigma \in \gamma \\
	\Rightarrow \
	& (\exists w \in L_a(\cS^a/G)) (\exists \sigma \in \Sigma_c)w \sigma \in L_a(\cS^a/G) \\
	& \wedge w \sigma \not\in L_r (\cS^a/G) \\
	& (\mbox{by Equation (\ref{SmallL})}) ,
\end{align*}
which contradicts $L_a(\cS^a/G) = L_r(\cS^a/G)$. 

(ONLY IF) Let us prove the ``only if'' part by contradiction. Suppose Equation (\ref{E9}) is not true. Then
\begin{align*}
	& L_a(\cS^a/G) \not= L_r(\cS^a/G) \\
	\Leftrightarrow \
	& L_a(\cS^a/G) \not\subseteq L_r(\cS^a/G) \vee L_r(\cS^a/G) \not\subseteq L_a(\cS^a/G) \\
	\Leftrightarrow \
	& L_a(\cS^a/G) \not\subseteq L_r(\cS^a/G)  \\
	& (\mbox{because } L_r(\cS^a/G) \subseteq L_a(\cS^a/G)) \\
	\Leftrightarrow \
	& (\exists w \in L_a(\cS^a/G) \cap L_r(\cS^a/G)) (\exists \sigma \in \Sigma_c) \\
	& w \sigma \in L_a(\cS^a/G) \wedge w \sigma \not\in  L_r(\cS^a/G)  \\
	\Rightarrow \
	& (\exists w \in L_a(\cS^a/G)) (\exists \sigma \in \Sigma_c)w \sigma \in L_a(\cS^a/G) \\
	& \wedge \neg (\forall v \in \Phi ^\pi (w))(\forall \gamma \in \cS ^a(v)) \sigma \in \gamma \\
	& (\mbox{Equation (\ref{SmallL})}) \\
	\Leftrightarrow \
	& \neg (\forall w \in L_a(\cS^a/G)) (\forall \sigma \in \Sigma_c)w \sigma \in L_a(\cS^a/G) \\
	& \Rightarrow (\forall v \in \Phi ^\pi (w))(\forall \gamma \in \cS ^a(v)) \sigma \in \gamma ,
\end{align*}
which contradicts Equation (\ref{E8}).

\hfill \bull

We assume that, without loss of generality, the non-closed language $K \subseteq L_m(G)$ is marked by a trim sub-automaton of $G$ as
\begin{align*}
	H = (X_H, \Sigma, \xi_H, x_0, X_{m,H}).
\end{align*}
In other words, let us add marked states $X_{m,H}$ to $H$ such that $K=L_m(H)$. We now prove the following theorem.

\begin{theorem} (Existence Condition for Nonblocking Supervisor)
	\label{theorem3}
	
	Consider a discrete event system $G$ under cyber-attacks. For a nonempty language $K \subseteq L_m(G)$ marked by $H$, there exists a nonblocking supervisor $\cS$ such that ${L_{am}(\cS^a/G)} = {K}$ 
	if and only if $K$ is $L_m(G)$-closed, CA-D-controllable with respect to $L(G)$, $\Sigma _{uc}$, and $\Sigma ^a_c$, and CA-D-observable with respect to $L(G)$, $\Sigma _{o}$, $\Sigma ^a_o$ and $\Phi^\pi$. 
	Furthermore, if a nonblocking supervisor exists, then $\cS_p$ (= $\cS_r$) defined in  Equation (\ref{SEBPNS}) is such a supervisor.
	
\end{theorem}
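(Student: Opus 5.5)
The plan is to reduce Theorem \ref{theorem3} to Theorem \ref{Theorem1} and Theorem \ref{theorem2}, with $J=\overline{K}$ playing the role of the closed specification. Since $K$ is marked by the trim sub-automaton $H$, the closure $\overline{K}=L(H)$ is a nonempty closed sublanguage of $L(G)$ generated by $H$. Theorem \ref{Theorem1} therefore applies to $\overline{K}$ with the same state-estimate supervisors $\cS_p=\cS_r$. By convention, CA-D-controllability and CA-D-observability of $K$ mean the same properties of $\overline{K}$.

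\emph{Sufficiency.} Assume $K$ is $L_m(G)$-closed, CA-D-controllable and CA-D-observable. Theorem \ref{Theorem1} gives $L_a(\cS_p^a/G)=L_r(\cS_p^a/G)=\overline{K}$. Then
\[
L_{am}(\cS_p^a/G)=\overline{K}\cap L_m(G)=K
\]
by $L_m(G)$-closedness. Nonblocking condition 1 follows because $\overline{L_{am}(\cS_p^a/G)}=\overline{K}=L_a(\cS_p^a/G)$. Nonblocking condition 2 follows from the ``if'' direction of Theorem \ref{theorem2}, since the large and small languages coincide.

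\emph{Necessity.} Let $\cS$ be nonblocking with $L_{am}(\cS^a/G)=K$.
\begin{enumerate}
\item Condition 1 gives $L_a(\cS^a/G)=\overline{L_{am}(\cS^a/G)}=\overline{K}$.
\item The ``only if'' direction of Theorem \ref{theorem2}, applied to condition 2, gives $L_r(\cS^a/G)=L_a(\cS^a/G)=\overline{K}$. So $\cS$ is a deterministic supervisor for the nonempty closed language $\overline{K}$.
\item The ``only if'' part of Theorem \ref{Theorem1} then yields that $\overline{K}$, and hence $K$, is CA-D-controllable and CA-D-observable.
\item Finally, $K=L_{am}(\cS^a/G)=L_a(\cS^a/G)\cap L_m(G)=\overline{K}\cap L_m(G)$, so $K$ is $L_m(G)$-closed.
\end{enumerate}
The last claim of the theorem, that $\cS_p$ $(=\cS_r)$ is such a supervisor whenever one exists, follows from the sufficiency argument. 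This uses Proposition \ref{Proposition3} for $\cS_p=\cS_r$.

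\emph{Main obstacle.} The most delicate point is the ``only if'' direction of Theorem \ref{theorem2}, which restricts attention to $\sigma\in\Sigma_c$. I must check that a first string in $L_a\setminus L_r$ really ends in a controllable event. Take a shortest $w\sigma\in L_a(\cS^a/G)\setminus L_r(\cS^a/G)$, so that $w\in L_r(\cS^a/G)$. If $\sigma$ were uncontrollable, then $w\sigma\in L_r(\cS^a/G)$ by the definition of the small language, a contradiction. Hence $\sigma\in\Sigma_c$, and Theorem \ref{theorem2} applies as stated.

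The remaining ingredients are already in place:
\begin{itemize}
\item $\overline{K}$ is generated by $H$, which holds because $H$ is trim.
\item $\varepsilon\in\overline{K}$, which holds because $K$ is nonempty.
\end{itemize}
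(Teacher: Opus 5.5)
Your proposal is correct and follows the paper's proof essentially step for step. It reduces to Theorem~\ref{Theorem1} with $J=\overline{K}$, uses Theorem~\ref{theorem2} to exchange nonblocking condition 2 for $L_a(\cS^a/G)=L_r(\cS^a/G)$, and reads off $L_m(G)$-closedness from $K=\overline{K}\cap L_m(G)$. Your shortest-string argument, showing that the first string in $L_a(\cS^a/G)\setminus L_r(\cS^a/G)$ ends in a controllable event, justifies a step the paper's proof of Theorem~\ref{theorem2} only asserts, but it does not change the route.
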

\noindent {\em Proof} 

(IF) Assume that $K$ is $L_m(G)$-closed, and $K$  (and hence $\overline{K}$) is CA-D-controllable with respect to $L(G)$, $\Sigma _{uc}$, and $\Sigma ^a_c$, and CA-D-observable with respect to $L(G)$, $\Sigma _{o}$, $\Sigma ^a_o$ and $\Phi^\pi$. Then, by Theorem \ref{Theorem1}, $\cS_p$ defined in Equation (\ref{SEBPNS}) is a deterministic supervisor such that $L_a(\cS^a/G) =L_r(\cS^a/G) =\overline{K}$. Therefore,
$$
{L_{am}(\cS^a/G)}  = {L_a(\cS^a/G) \cap L_m(G)} = {\overline{K} \cap L_m(G)} = K.
$$
By Theorem \ref{theorem2}, $L_a(\cS^a/G) =L_r(\cS^a/G)$ implies that the second condition for nonblocking is satisfied. The first condition for nonblocking is also satisfied, because
\begin{align*}
	& {L_{am}(\cS^a/G)} = K \\
	\Rightarrow \ & \overline{L_{am}(\cS^a/G)} = \overline{K} = L_a(\cS^a/G).
\end{align*}

(ONLY IF) Assume that there exists a nonblocking supervisor $\cS$ such that ${L_{am}(\cS^a/G)} = {K}$. By Theorem \ref{theorem2}, the second condition for nonblocking implies $L_a(\cS^a/G) =L_r(\cS^a/G)$. 
The first condition for nonblocking implies $L_a(\cS^a/G) =\overline{L_{am}(\cS^a/G)} =\overline{K}$. 
By Theorem \ref{Theorem1}, $\overline{K}$ (and hence $K$) is CA-D-controllable with respect to $L(G)$, $\Sigma _{uc}$, and $\Sigma ^a_c$, and CA-D-observable with respect to $L(G)$, $\Sigma _{o}$, $\Sigma ^a_o$ and $\Phi^\pi$. $K$ is also $L_m(G)$-closed, because
$$
K= {L_{am}(\cS^a/G)}  = {L_a(\cS^a/G) \cap L_m(G)} = {\overline{K} \cap L_m(G)} .
$$

\hfill \bull

{Theorems \ref{theorem2} and \ref{theorem3} say that deterministic control and $L_m(G)$-closeness are necessary and sufficient for nonblocking control. Since nonblocking control is important in many applications of discrete event systems, deterministic control is also important. }

\begin{theorem} (Checking $L_m(G)$-closeness)
	\label{Theorem6}
	
	For a nonempty language $K \subseteq L_m(G)$ marked by $H$, $K$ is $L_m(G)$-closed if and only if
	$$
	X_{m,H} = X_H \cap X_m.
	$$
	
\end{theorem}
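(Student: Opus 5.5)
The plan is to translate both sides of the equivalence into statements about which states of $H$ are reached by which strings. Everything rests on three facts.

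First, $H$ is a sub-automaton of the deterministic automaton $G$ with $\xi_H = \xi|_{X_H \times \Sigma}$. So for every $w \in L(H)$ we have $\xi_H(x_0,w) = \xi(x_0,w) \in X_H$.

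Second, $H$ is trim, which gives $\overline{K} = \overline{L_m(H)} = L(H)$. Moreover, every state $x \in X_H$ is reached by some string $w_x \in L(H)$.

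Third, by definition $K = L_m(H) = \{ w \in L(H) : \xi(x_0,w) \in X_{m,H} \}$.

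Combining these with $L_m(G) = \{ w \in L(G) : \xi(x_0,w) \in X_m\}$, I will first establish the identity
\begin{align*}
\overline{K} \cap L_m(G) = \{ w \in L(H) : \xi(x_0,w) \in X_H \cap X_m \}.
\end{align*}
This holds because $\xi(x_0,w) \in X_H$ is automatic for $w \in L(H)$.

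For the ``if'' direction, assume $X_{m,H} = X_H \cap X_m$. The description of $K$ above and the displayed identity then define exactly the same set of strings of $L(H)$. Hence $K = \overline{K} \cap L_m(G)$, i.e., $K$ is $L_m(G)$-closed.

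For the ``only if'' direction, assume $K = \overline{K} \cap L_m(G)$ and prove the two inclusions state by state. Take $x \in X_H$ and a string $w_x \in L(H)$ reaching it, which exists by trimness. Then
\begin{align*}
x \in X_{m,H} \Leftrightarrow w_x \in K \Leftrightarrow w_x \in \overline{K} \cap L_m(G) \Leftrightarrow x \in X_m .
\end{align*}
This yields $X_H \cap X_{m,H} = X_H \cap X_m$. To conclude, I use that $X_{m,H} \subseteq X_H$, since $H$ is an automaton whose marked states lie in its state set. This gives $X_{m,H} = X_H \cap X_m$.

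The only delicate step is the state-by-state argument in the ``only if'' direction. There one needs that membership of $w_x$ in $K$ depends only on the state $x$ it reaches, not on the choice of $w_x$. This is exactly where the determinism of $G$, and hence of $H$, is used, together with the trimness of $H$ to guarantee that every state is reached at all. Everything else is unwinding definitions.
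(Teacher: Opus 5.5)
Your proposal is correct and follows the same route as the paper: both reduce the string-level identity $K = \overline{K}\cap L_m(G)$ to the state-level identity $X_{m,H} = X_H \cap X_m$, using that $H$ is a trim sub-automaton of the deterministic automaton $G$. Your version spells out the state-by-state step that the paper compresses into a single equivalence justified by trimness, namely choosing a string $w_x$ that reaches each $x \in X_H$ and noting that $X_{m,H}\subseteq X_H$.
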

\noindent {\em Proof} 
\begin{align*}
	& \mbox{$K$ is $L_m(G)$-closed} \\
	\Leftrightarrow \ & K = {\overline{K} \cap L_m(G)} \\
	\Leftrightarrow \ & (\forall w \in \Sigma^*) w \in K \Leftrightarrow w \in {\overline{K} \cap L_m(G)} \\
	\Leftrightarrow \ & (\forall x \in X) x \in X_{m,H} \Leftrightarrow x \in X_H \cap X_m \\
	& (\mbox{because $G$ and $H$ are trim}) \\	
	\Leftrightarrow \ & X_{m,H} = X_H \cap X_m.
\end{align*}
\hfill \bull

\section{Checking CA-D-controllability and CA-D-observability} \label{s5}

Both deterministic supervisory control and nonblocking supervisory control require CA-D-controllability and CA-D-observability. Let us investigate how to check CA-D-controllability and CA-D-observability in this section.

Checking whether $J$ is {CA-D-controllable} with respect to $L(G)$, $\Sigma _{uc}$, and $\Sigma ^a_c$ is straightforward, because of the following. 
(1) Checking $J (\Sigma _{uc} \cup \Sigma ^a_c) \cap L(G) \subseteq J$ can be done using existing methods for checking controllability by letting the uncontrollable events be $\Sigma _{uc} \cup \Sigma ^a_c$. 
(2) Checking $J \subseteq (\Sigma - \Sigma^a_c)^*$ is also easy.

Checking whether $J$ is {CA-D-observable} with respect to $L(G)$, $\Sigma _{o}$, $\Sigma ^a_o$ and $\Phi^\pi$ is more complex. Let us first prove the following theorem.

\begin{theorem} (Checking CA-D-observability)
	\label{Theorem5} 
	
	$J$ is CA-D-observable with respect to $L(G)$, $\Sigma _{o}$, $\Sigma ^a_o$ and $\Phi^\pi$ if and only if for all $v \in \Phi ^\pi (L(G))$ and $\sigma \in \Sigma$, Equation (\ref{Prop3}) is satisfied, that is, 
	\begin{align*}
		& (\forall w \in (\Phi ^\pi)^{-1} (v)) w \in J \wedge w\sigma \in L(G) \Rightarrow w\sigma \in J \\
		\Leftrightarrow \
		& (\exists w' \in (\Phi ^\pi)^{-1} (v)) w' \in J \wedge w'\sigma \in J .
	\end{align*}
	
\end{theorem}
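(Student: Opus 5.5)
The plan is to prove the two directions separately. The ``only if'' direction is already done. The ``if'' direction is a short contradiction argument that runs the second half of the proof of Proposition \ref{Proposition3} in reverse.

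(ONLY IF) Assume $J$ is CA-D-observable with respect to $L(G)$, $\Sigma _{o}$, $\Sigma ^a_o$ and $\Phi^\pi$. The first part of the proof of Proposition \ref{Proposition3} shows exactly that Equation (\ref{Prop3}) then holds for every $v \in \Phi ^\pi (L(G))$ and $\sigma \in \Sigma$. That argument used only CA-D-observability and not the automaton $H$, so I will cite it directly.

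(IF) Assume Equation (\ref{Prop3}) holds for all $v \in \Phi ^\pi (L(G))$ and $\sigma \in \Sigma$, and suppose, for contradiction, that $J$ is not CA-D-observable. Negating the definition gives $w, w' \in J$ and $\sigma \in \Sigma$ with $w\sigma \in J$, $w'\sigma \in L(G)$, $w'\sigma \notin J$, and some $v \in \Phi ^\pi(w) \cap \Phi ^\pi(w')$. The steps are then as follows.
\begin{enumerate}
\item Since $w \in J \subseteq L(G)$, we have $v \in \Phi ^\pi (L(G))$, so Equation (\ref{Prop3}) applies to this pair $(v,\sigma)$.
\item From $v \in \Phi^\pi(w)$ we get $w \in (\Phi ^\pi)^{-1}(v)$. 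Together with $w \in J$ and $w\sigma \in J$, this makes $w$ a witness for the right-hand side of Equation (\ref{Prop3}).
\item By the equivalence, the left-hand side holds: every $u \in (\Phi ^\pi)^{-1}(v)$ with $u \in J$ and $u\sigma \in L(G)$ satisfies $u\sigma \in J$.
\item Apply this with $u = w'$, which is allowed since $v \in \Phi^\pi(w')$ gives $w' \in (\Phi ^\pi)^{-1}(v)$. Because $w' \in J$ and $w'\sigma \in L(G)$, we conclude $w'\sigma \in J$, contradicting $w'\sigma \notin J$.
\end{enumerate}

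I expect no real obstacle here; the main care point is bookkeeping. One needs to check that the $v$ produced by the negation lies in $\Phi ^\pi(L(G))$, so that the hypothesis is available. One also needs to read the quantifier scopes in Equation (\ref{Prop3}) correctly, with each side a statement about the fixed pair $(v,\sigma)$, and to use only the ``right-to-left'' half of the equivalence in the ``if'' direction.
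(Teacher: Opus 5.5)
Your IF direction is correct and is the paper's own argument: you negate CA-D-observability, take the common observation $v \in \Phi^\pi(w) \cap \Phi^\pi(w')$, and observe that at $(v,\sigma)$ the right-hand side of Equation (\ref{Prop3}) holds while the left-hand side fails.

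The gap is in the ONLY IF direction. There you defer, exactly as the paper does, to the first half of the proof of Proposition \ref{Proposition3}. That argument does not prove the left-to-right half of Equation (\ref{Prop3}). Its Case 1 ends with $(\forall w \in (\Phi^\pi)^{-1}(v))(w \in J \wedge w\sigma \in L(G) \Rightarrow w\sigma \in J \wedge w\sigma \notin J)$. This only says that no $w \in (\Phi^\pi)^{-1}(v) \cap J$ satisfies $w\sigma \in L(G)$, which is not a contradiction. You remark that the cited argument uses only CA-D-observability; in fact Case 1 uses no hypothesis at all, so if it were valid the left-to-right implication would be a tautology.

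The implication is in fact false. Whenever no $w \in (\Phi^\pi)^{-1}(v) \cap J$ can be continued by $\sigma$ in $L(G)$, the left-hand side holds vacuously and the right-hand side fails. For example, let $\Sigma = \Sigma_o = \{a\}$, $\Sigma_o^a = \Sigma_c^a = \emptyset$, $L(G) = \{\varepsilon, a\}$ and $J = L(G)$. Then $J$ is CA-D-observable because $L(G) - J = \emptyset$. But for $v = a$ and $\sigma = a$ we have $(\Phi^\pi)^{-1}(a) = \{a\}$: the left-hand side is true since $aa \notin L(G)$, and the right-hand side is false since $aa \notin J$. So the ONLY IF direction, as stated, cannot be proved by any argument.

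What is true, and is all your IF argument actually uses, is the one-sided version. $J$ is CA-D-observable if and only if, for all $v \in \Phi^\pi(L(G))$ and $\sigma \in \Sigma$, the right-hand side of Equation (\ref{Prop3}) implies the left-hand side. Equivalently, the full equivalence holds at every pair $(v,\sigma)$ for which some $w \in (\Phi^\pi)^{-1}(v) \cap J$ has $w\sigma \in L(G)$.

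The ONLY IF direction of this corrected statement is immediate. Suppose $w' \in (\Phi^\pi)^{-1}(v)$ satisfies $w', w'\sigma \in J$, and $w \in (\Phi^\pi)^{-1}(v) \cap J$ satisfies $w\sigma \in L(G) - J$. Then $v \in \Phi^\pi(w') \cap \Phi^\pi(w)$, which contradicts CA-D-observability applied to $(w', w, \sigma)$.

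You should state and prove this corrected version directly rather than citing Proposition \ref{Proposition3}. That proposition's conclusion $\cS_p(v) = \cS_r(v)$ suffers from the same vacuity problem.
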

\noindent {\em Proof} 

(ONLY IF) This part is Proposition \ref{Proposition3}.

(IF) Let us prove the ``if'' part by contradiction. Suppose that $J$ is not CA-D-observable with respect to $L(G)$, $\Sigma _{o}$, $\Sigma ^a_o$ and $\Phi^\pi$, then
\begin{align*}
	& \neg (\forall w', w \in  J)(\forall \sigma \in \Sigma) (w'\sigma \in J \wedge w\sigma \in L(G) \\
	& \wedge w\sigma \notin J \Rightarrow \Phi ^\pi(w') \cap \Phi ^\pi(w)=\emptyset ) \\
	\Leftrightarrow \
	& (\exists w', w \in  J)(\exists \sigma \in \Sigma) (w'\sigma \in J \wedge w\sigma \in L(G) \\
	& \wedge w\sigma \notin J \wedge \Phi ^\pi(w') \cap \Phi ^\pi(w) \not= \emptyset ) \\
	\Rightarrow \
	& (\exists w', w \in  J)(\exists \sigma \in \Sigma) (w'\sigma \in J \wedge w\sigma \in L(G) \wedge w\sigma \notin J \\
	& \wedge (\exists v \in \Phi ^\pi (L(G)) ) v \in \Phi ^\pi(w') \wedge v \in \Phi ^\pi(w) .
\end{align*}

Therefore, there exists $v \in \Phi ^\pi (L(G))$ and $\sigma \in \Sigma$,
\begin{align*}
	& (\exists w', w \in  J) (w'\sigma \in J \wedge w\sigma \in L(G) \wedge w\sigma \notin J \\
	& \wedge v \in \Phi ^\pi(w') \wedge v \in \Phi ^\pi(w) \\
	\Rightarrow \
	& (\exists w \in (\Phi ^\pi)^{-1} (v)) w \in J \wedge w\sigma \in L(G) \wedge w\sigma \not\in J \\
	& \wedge (\exists w' \in (\Phi ^\pi)^{-1} (v)) w' \in J \wedge w'\sigma \in J \\
	\Leftrightarrow \
	& \neg (\forall w \in (\Phi ^\pi)^{-1} (v)) w \in J \wedge w\sigma \in L(G) \Rightarrow w\sigma \in J \\
	& \wedge (\exists w' \in (\Phi ^\pi)^{-1} (v)) w' \in J \wedge w'\sigma \in J ,
\end{align*}
which contradicts Equation (\ref{Prop3}).

\hfill \bull

Based on the above theorem, we propose the following procedure to check CA-D-observability. 

\emph{Step 1}. {For each attackable transition $tr\in \xi^a$, let us assume, without loss of generality, that $A_{tr}$ is marked by a deterministic automaton} 
$$
F_{tr} = (X_{tr} , \Sigma, \xi_{tr}, x_{0,tr}, X_{m,tr}).
$$

We replace a transition $tr=(x, \sigma, x') \in \xi$ in $G$ by $F_{tr}$ as follows.
\begin{align*}
	G_{tr \rightarrow F_{tr}} = ( X \cup X_{tr}, \Sigma , \xi_{tr \rightarrow F_{tr}}, x_0 )
\end{align*}
where $\xi_{tr \rightarrow F_{tr}} = (\xi - \{ (x, \sigma, x') \} ) \cup \xi_{tr} \cup \{ (x, \varepsilon, x_{0,tr}) \} \cup \{ (x_{m,tr}, \varepsilon, x'): x_{m,tr} \in X_{m,tr} \}$. 

Denote the automaton after replacing all attackable transitions in $G$ as
$$
G^\pi=( X^\pi, \Sigma , \xi^ \pi , x_0, X^\pi_m)=( X \cup \hat{X}, \Sigma , \xi^ \pi , x_0, X),
$$
where $\hat{X}= \cup_{tr \in \xi^a} X_{tr}$ is the set of states added during the replacement and $X^\pi_m=X$ is the set of marked states. Note that $G^\pi$ is a nondeterministic automaton, 
that is, $\xi^ \pi$ is a mapping $\xi^ \pi : X^\pi \times \Sigma \rightarrow 2^{X^\pi}$. The languages generated and marked by nondeterministic automaton are defined respectively as
\begin{align*}
	& L(G^\pi) = \{ w \in \Sigma ^*: \xi^ \pi (x_0,w) \not= \emptyset \} \\
	& L_m(G^\pi) = \{ w \in L(G): \xi^ \pi (x_0,w) \cap X_m \not= \emptyset \}.
\end{align*}

It is not difficult to see that
\begin{align*}
	& L(G^\pi) = \overline{\Theta ^\pi(L(G))} \\
	& L_m(G^\pi) = \Theta ^\pi(L(G)).
\end{align*}

\emph{Step 2}. We replace unobservable transitions by $\varepsilon$-transitions and denote the resulting automaton as
$$
G_{\varepsilon}^\pi=( X \cup \hat{X}, \Sigma_o , \xi _\varepsilon^\pi , x_0, X)
$$
where $\xi _\varepsilon^\pi = \{ (x, \sigma, x'): (x, \sigma, x') \in \xi ^\pi \wedge \sigma \in \Sigma _o \} \cup \{ (x, \varepsilon, x'): (x, \sigma, x') \in \xi ^\pi \wedge \sigma \not\in \Sigma _o \} $. 

\emph{Step 3}. We convert the nondeterministic automaton $G_{\varepsilon} ^\pi$ to a deterministic automaton $G_{obs}^\pi$, called CA-observer, as follows.
\begin{align*}
	G_{obs}^\pi & =(Y,\Sigma _o, \zeta, y_0, Y_m) \\
	&= Ac(2^{X \cup \hat{X}},\Sigma _o, \zeta, UR(\{ x_0\}), Y_m), 
\end{align*}
where $Ac(.)$ denotes the accessible part; $UR(.)$ is the unobservable reach defined, for $y \subseteq X \cup \hat{X}$, as
$$
UR(y) = \{ x \in X \cup \hat{X}: (\exists x' \in y) x\in \xi _{\varepsilon}^\pi (x', \varepsilon) \}.
$$

The transition function $\zeta$ is defined, for $y \in Y$ and $\sigma \in \Sigma _o$ as
$$
\zeta (y, \sigma) = UR(\{x \in X \cup \hat{X}: (\exists x' \in y)x\in \xi _\varepsilon^\pi (x',\sigma) \}).
$$
The marked states are defined as
$$
Y_m = \{y \in Y: y \cap X \not = \emptyset \}.
$$

It is not difficult to see that
\begin{align*}
	& L(G_{obs}^\pi) = \overline{\Phi ^\pi(L(G))} \\
	& L_m(G_{obs}^\pi) = \Phi ^\pi(L(G)).
\end{align*}

From Theorem \ref{Theorem5}, we have the following corollary.

\begin{corollary} \label{Corollary1} 
	
	$J$ is CA-D-observable with respect to $L(G)$, $\Sigma _{o}$, $\Sigma ^a_o$ and $\Phi^\pi$ if and only if for all $y \in Y_m$ and $\sigma \in \Sigma$,
	\begin{align*}
		& (\forall x \in y) x \in X_H \wedge \xi (x,\sigma) \in X \Rightarrow \xi (x,\sigma) \in X_H \\
		\Leftrightarrow \
		& (\exists x' \in y) x' \in X_H \wedge \xi (x',\sigma) \in X_H .
	\end{align*}
	
\end{corollary}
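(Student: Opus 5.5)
The plan is to reduce Corollary \ref{Corollary1} to Theorem \ref{Theorem5} by showing that the marked states of the CA-observer $G_{obs}^\pi$ are exactly the state estimates $E^a(v)$, enlarged only by auxiliary states in $\hat{X}$ that never affect the condition. With that identification, the condition of Theorem \ref{Theorem5}, quantified over $v \in \Phi^\pi(L(G))$, becomes the stated condition quantified over $y \in Y_m$.

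\emph{Step 1: identify observer states with state estimates.} For $v \in \Sigma_o^*$ with $\zeta(y_0,v)$ defined, put $y_v=\zeta(y_0,v)$. By the construction of $G^\pi$ and $G^\pi_\varepsilon$ (Steps 1--3), I will argue that
$$
y_v \cap X=\{\xi(x_0,w): w\in L(G),\ v\in\Phi^\pi(w)\}=E^a(v).
$$
The argument rests on the fact that a path in $G^\pi_\varepsilon$ from $x_0$ to an original state $x\in X$ whose observable label is $v$ corresponds to a string $w\in L(G)$ with $\xi(x_0,w)=x$ together with a choice of attack strings $s_k\in A_{tr_k}$ for the attackable transitions along $w$. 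This correspondence holds because the inserted gadget $F_{tr}$ can only be exited through a marked state of $F_{tr}$, and hence only after reading a string of $A_{tr}$, and because the projection $P$ erases exactly the $\varepsilon$-relabelled unobservable events. This is the same reasoning that underlies $L_m(G^\pi_{obs})=\Phi^\pi(L(G))$, and it shows that $y_v\in Y_m$ if and only if $v\in\Phi^\pi(L(G))$. Consequently $v\mapsto y_v$ maps $\Phi^\pi(L(G))$ onto $Y_m$.

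\emph{Step 2: translate the condition.} The states of $y\setminus X$ lie in $\hat{X}$. Since $X_H\subseteq X$, every such state falsifies the antecedent of the left-hand side and the conjunct $x'\in X_H$ on the right-hand side, so both sides of the corollary's condition depend only on $y\cap X=E^a(v)$. Following the chain already used in the proof of Proposition \ref{Proposition3}, and using that $H$ is a sub-automaton of $G$ generating $J$, we have
$$
(\forall w\in(\Phi^\pi)^{-1}(v))\, w\in J\wedge w\sigma\in L(G)\Rightarrow w\sigma\in J
$$
if and only if $(\forall x\in E^a(v))\, x\in X_H\wedge\xi(x,\sigma)\in X\Rightarrow \xi(x,\sigma)\in X_H$. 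In the same way, the existential side holds if and only if $(\exists x'\in E^a(v))\, x'\in X_H\wedge \xi(x',\sigma)\in X_H$. Here $w\in J$ iff $\xi(x_0,w)\in X_H$, with $\xi(x_0,w)$ reached inside $H$, and $w\sigma\in J$ iff $\xi(\xi(x_0,w),\sigma)\in X_H$, because $\xi_H$ is the restriction of $\xi$ to $X_H$.

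\emph{Step 3: conclude.} By Steps 1 and 2, Equation (\ref{Prop3}) holds for all $v\in\Phi^\pi(L(G))$ and all $\sigma$ if and only if the corollary's condition holds for all $y\in Y_m$ and all $\sigma$. Theorem \ref{Theorem5} then gives the equivalence with CA-D-observability.

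I expect the main obstacle to be Step 1, specifically the inclusion $y_v\cap X\subseteq E^a(v)$. Two issues need care there: the $\varepsilon$-closure $UR$ can pass partway into gadget states, and attack strings in $A_{tr}$ may themselves contain events that are later projected. My approach is induction on the length of a path in $G^\pi_\varepsilon$, segmenting it at successive visits to states of $X$. Each segment between consecutive visits is either a single non-attacked transition of $G$ or one complete traversal of some $F_{tr}$, and it contributes to $v$ exactly $P(\sigma)$ or $P(s)$ for some $s\in A_{tr}$, respectively.
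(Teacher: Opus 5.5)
Your route is the paper's own. The paper derives the corollary in one line from Theorem~\ref{Theorem5}, leaving implicit exactly your identification of marked observer states with the estimates $E^a(v)$, $v\in\Phi^\pi(L(G))$, padded by gadget states in $\hat{X}$ that are inert because $X_H\subseteq X$. Steps 1 and 2 are sound: they show that the corollary's condition at $(y_v,\sigma)$ is equivalent to Equation~(\ref{Prop3}) at $(v,\sigma)$. Since $A_{tr}\subseteq\Sigma_o^*$, your worry about projected events inside attack strings does not arise.

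The gap is Step 3. The ``only if'' half of Theorem~\ref{Theorem5} is false, and with it the ``only if'' half of the corollary as written. Whenever $y\cap X_H=\emptyset$, or $\sigma$ is undefined at every state of $y\cap X_H$, the left side holds vacuously while the right side fails. For instance, take full observation, no attacks and $J=L(G)$: $J$ is trivially CA-D-observable, yet the condition fails at $y_0=\{x_0\}$ for any $\sigma$ not enabled at $x_0$. In Case~1 of the robot example it fails at $y=\{8\}$ and at $y=\{25\}$, although the paper correctly asserts CA-D-observability there. The error originates in the first case of the proof of Proposition~\ref{Proposition3}. There the two universal statements combine only to $\neg(w\in J\wedge w\sigma\in L(G))$ for every $w\in(\Phi^\pi)^{-1}(v)$, which is not a contradiction.

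What your Steps 1--2 do prove, together with the argument of the ``if'' half of Theorem~\ref{Theorem5} (which only uses that the right side implies the left side), is a corrected statement. Namely, $J$ is CA-D-observable if and only if, for all $y\in Y_m$ and $\sigma\in\Sigma$, the right-hand side implies the left-hand side. Equivalently, the biconditional holds on those pairs for which $\xi(x,\sigma)$ is defined for some $x\in y\cap X_H$. The ``only if'' direction of this version has to be argued directly. Suppose $x_1\in y$ witnesses the right side and $x_2\in y$ violates the left side. Writing $y=y_v$ with $y\cap X=E^a(v)$, you get $w_1,w_2\in J$ with $v\in\Phi^\pi(w_1)\cap\Phi^\pi(w_2)$, $w_1\sigma\in J$ and $w_2\sigma\in L(G)\setminus J$, contradicting CA-D-observability.

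Both this step and your Step 2 use $w\in L(G)\wedge\xi(x_0,w)\in X_H\Rightarrow w\in J$. A mere sub-automaton does not provide this: a string may leave $X_H$ and re-enter it, so ``reached inside $H$'' is not implied by $\xi(x_0,w)\in X_H$. State this property explicitly as the content of the paper's ``without loss of generality''; it is obtained by refining $G$ with a recognizer of $J$.
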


{Since the number of states in $G_{obs}^\pi$ is bounded by $2^{|X \cup \hat{X}|}$, the computational complexity of checking CA-D-observability using Corollary \ref{Corollary1} is $O(|X| |\Sigma_o| 2^{|X \cup \hat{X}|})$. It is possible that more efficient methods exist to check CA-D-observability. The computational complexity of checking CA-D-controllability can be obtained as follows. (1) The computational complexity of checking $J (\Sigma _{uc} \cup \Sigma ^a_c) \cap L(G) \subseteq J$ is the same as the computational complexity of checking conventional controllability \cite{cassandras2021introduction}, which is $O(|X| |\Sigma|)$. (2) The computational complexity of checking $J \subseteq (\Sigma - \Sigma^a_c)^*$ is also $O(|X| |\Sigma|)$.
}

\section{An Illustrative Example} \label{s6}

To illustrate deterministic and nonblocking supervisory control under cyber-attacks, consider a robot moving in a space. The space is divided into $5 \times 5 = 25$ cells as shown in Fig. \ref{fig:Figure2}. 	

\begin{figure}[htb!]
	\centering
	\includegraphics[scale=0.4]{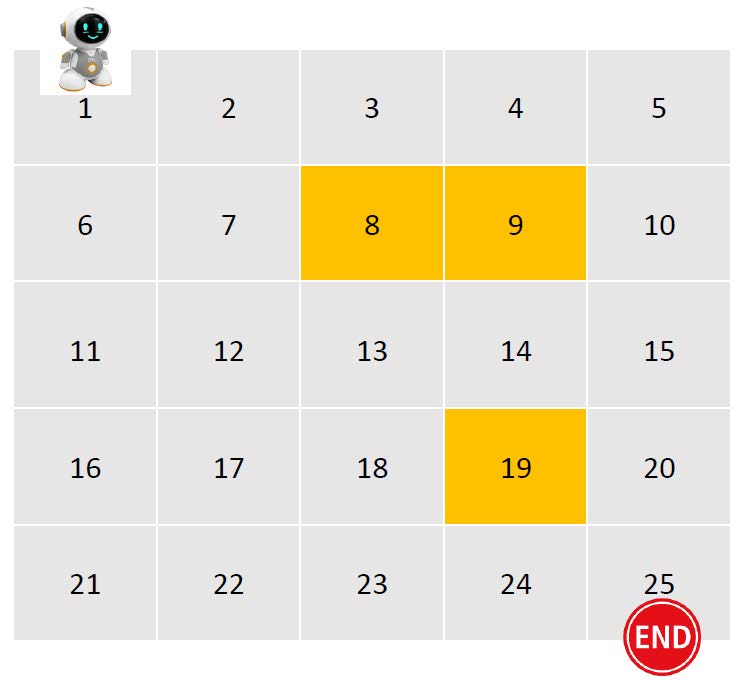}
	\caption{A robot moves in a space. }
	\label{fig:Figure2}
\end{figure}

From any cell, the robot can move either to the cell on the right in the figure or to the cell below (but not to the cell on the left or to the cell above). Initially, the robot is in Cell 1 and the goal is to move to Cell 25. Define the events as
\begin{align*}
	& e1, e2, e3, e4, e5: \mbox{the robot moves to the cell on the right} \\
	& d1, d2, d3, d4, d5: \mbox{the robot moves to the cell below} .
\end{align*}
The automaton 
\begin{align*}
	& G=( X, \Sigma , \xi , x_0, X_m ) \\
	= & ( \{1, ..., 25 \}, \{ei, di : i=1,...,5 \} , \xi , 1, \{25\} )
\end{align*} 
describing the movement of the robot is shown in Fig. \ref{fig:G}. Since, for deterministic supervisory control, marked states are irrelevant. So, for convenience, let us assume that all states are marked, that is, $X=X_m$ until we discuss nonblocking control later.

\begin{figure}[htb!]
	\centering
	\includegraphics[scale=0.5]{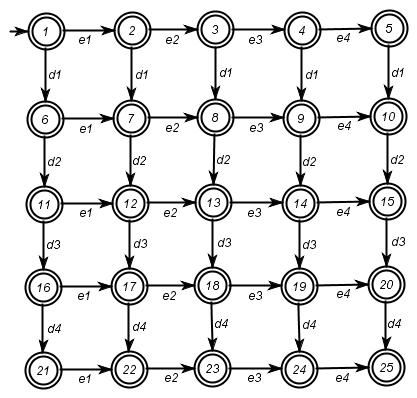}
	\caption{The automaton $G$ describing the movement of the robot. The initial state is $1$, denoted by $\rightarrow$. All states are marked, denoted by double circles.}
	\label{fig:G}
\end{figure}

Assume that there are 3 objects in Cells 8, 9, and 19, respectively. To avoid collision, the robot shall not enter these cells. In other words, States 8, 9, and 19 are illegal. Therefore, the automaton $H$ is shown in Fig. \ref{fig:H}.

\begin{figure}[htb!]
	\centering
	\includegraphics[scale=0.5]{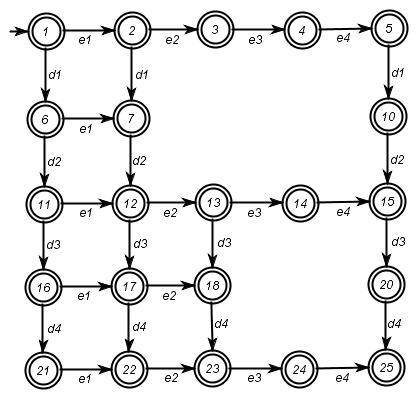}
	\caption{The automaton $H$ generating the specification language $J$. }
	\label{fig:H}
\end{figure}

To focus on cyber-attacks, we assume that all events are controllable and observable, that is, $\Sigma _c = \Sigma _o = \Sigma$. This means, in particular, $G^\pi =G_{\varepsilon}^\pi$. 

We first investigate sensor attacks and assume $\Sigma ^a_c = \emptyset$. Let us consider different cases and construct the corresponding ALTER model are as follow.

\emph{Case 1.} Event $e2$ in all transitions $tr=(x, e2, x')$ is replaced with $e1$ by the attacker, that is, for all $tr=(x, e2, x')$, $A_{tr}=\{e1\}$ (replacement attack). 
In this case, $G^\pi=G_{\varepsilon}^\pi$ can be obtained by replacing $e2$ with $e1$ as shown in Fig. \ref{fig:Ga1} without introducing $\varepsilon$ transitions. 

\begin{figure}[htb!]
	\centering
	\includegraphics[scale=0.5]{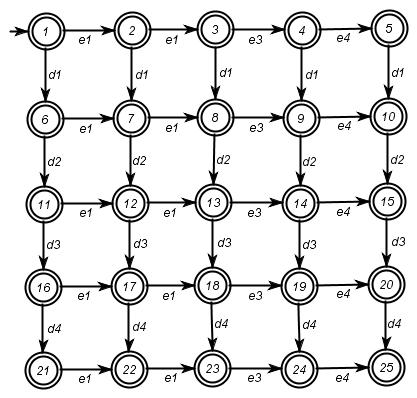}
	\caption{The automaton $G^\pi=G_{\varepsilon}^\pi$ for Case 1. }
	\label{fig:Ga1}
\end{figure}

Since $G^\pi$ is a deterministic automaton, the CA-observer $G_{obs}^\pi$ is isomorphic to $G^\pi$. In this case, $J$ is CA-D-observable with respect to $L(G)$, $\Sigma _{o}$, $\Sigma ^a_o$ and $\Phi^\pi$. In fact, we can prove the following general result.

\begin{proposition} \label{Proposition4} 
	
	Consider a discrete event system $G$ under cyber-attacks and a nonempty closed language $J \subseteq L(G)$ generated by $H$. 
	If $G_{\varepsilon}^\pi$ is a deterministic automaton, then $J$ is CA-D-observable with respect to $L(G)$, $\Sigma _{o}$, $\Sigma ^a_o$ and $\Phi^\pi$.
	
\end{proposition}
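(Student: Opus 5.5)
The plan is to prove CA-D-observability directly from its definition rather than through Corollary \ref{Corollary1}. The key observation is that when $G_{\varepsilon}^\pi$ is deterministic, any two strings $w, w' \in L(G)$ that share an attacked observation must lead $G$ to the same state. Since $J$ is generated by the sub-automaton $H$ of $G$, the one-step extensions of $w$ and $w'$ in $J$ then coincide.

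\emph{Step 1: a path lemma.} For every $w \in L(G)$ and every $v \in \Phi^\pi(w)$, I will show that there is a path in $G_{\varepsilon}^\pi$ from $x_0$, labelled $v$ (with $\varepsilon$-moves allowed), ending in the state $\xi(x_0,w) \in X$. The argument is by induction on $|w|$, following the construction of $G^\pi$ and $G_{\varepsilon}^\pi$ in Steps 1 and 2 of Section \ref{s5}. Consider a transition $(x_{k-1},\sigma_k,x_k)$ along $w$.
\begin{enumerate}
\item If it is not attackable, it survives in $G^\pi$ unchanged. After Step 2 it is either labelled $\sigma_k \in \Sigma_o$ or becomes an $\varepsilon$-move, which matches the contribution $P(\sigma_k)$ to $v$.
\item If it is attackable, it is replaced by $F_{tr}$, entered and exited by $\varepsilon$-moves. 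Every string $s \in A_{tr}$ drives $F_{tr}$ from $x_{0,tr}$ to a marked state, so in $G_{\varepsilon}^\pi$ there is a path from $x_{k-1}$ to $x_k$ labelled $P(s)$.
\end{enumerate}
Concatenating these pieces gives the required path. This is the step I expect to need the most care, because it is where the informal claim $L_m(G_{obs}^\pi)=\Phi^\pi(L(G))$ is made precise at the level of end states rather than just languages.

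\emph{Step 2: use determinism.} Here I interpret ``$G_{\varepsilon}^\pi$ is deterministic'' as: there are no $\varepsilon$-transitions, and each state has at most one successor per event in $\Sigma_o$. Then every $v$ admits at most one path from $x_0$. Suppose $v \in \Phi^\pi(w)\cap\Phi^\pi(w')$. By Step 1 both $\xi(x_0,w)$ and $\xi(x_0,w')$ are endpoints of a path labelled $v$, so uniqueness gives $\xi(x_0,w)=\xi(x_0,w')=:x$.

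\emph{Step 3: conclude by contradiction.} Suppose $J$ is not CA-D-observable. Then there exist $w,w' \in J$ and $\sigma \in \Sigma$ with $w\sigma\in J$, $w'\sigma \in L(G)$, $w'\sigma\notin J$, and $\Phi^\pi(w)\cap\Phi^\pi(w')\neq\emptyset$. By Step 2, $w$ and $w'$ reach the same state $x$, and $x \in X_H$. From $w\sigma \in J = L(H)$ and $\xi_H=\xi|_{X_H\times\Sigma}$ we get $\xi(x,\sigma)\in X_H$. Hence $w'\sigma$ is also generated by $H$, so $w'\sigma\in J$, a contradiction.

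I deliberately avoid Corollary \ref{Corollary1} here. Although every $y\in Y_m$ would be a singleton, the equivalence in that corollary can fail vacuously for a singleton $\{x\}$: when $x\notin X_H$, or when $\xi(x,\sigma)$ is undefined, the left-hand side is true while the right-hand side is false. The direct argument from the definition of CA-D-observability sidesteps this.
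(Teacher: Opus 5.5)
Your proof is correct, and it takes a different route from the paper. The paper goes through Corollary~\ref{Corollary1}. It argues that determinism of $G_{\varepsilon}^\pi$ makes $G_{obs}^\pi$ isomorphic to $G^\pi$, so every observer state is a singleton $\{x\}$. For $y=\{x\}$ the corollary's condition reads ``$x\in X_H \wedge \xi(x,\sigma)\in X \Rightarrow \xi(x,\sigma)\in X_H$ iff $x\in X_H\wedge \xi(x,\sigma)\in X_H$'', and the paper asserts this holds because $H$ is a sub-automaton of $G$.

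You work directly from the definition of CA-D-observability instead. Your path lemma is a state-level version of the claim $L_m(G_{obs}^\pi)=\Phi^\pi(L(G))$, which the paper only asserts for languages. With it you show that two strings sharing an attacked observation reach the same state of $G$, so whether a one-step extension lies in $J=L(H)$ depends only on that state.

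Your objection to the paper's route is well founded. For $y=\{x\}$ with $x\notin X_H$ (e.g.\ $\{8\}$ in Case~1), or with $\xi(x,\sigma)$ undefined (e.g.\ $\{25\}$), the left side holds vacuously and the right side fails. So the condition the paper claims to verify is false as written. The same vacuous instances already affect Equation~(\ref{Prop3}), on which Proposition~\ref{Proposition3} and Theorem~\ref{Theorem5} rest. The paper's reduction is valid only when $x\in X_H$ and $\xi(x,\sigma)$ is defined. There both sides become $\xi(x,\sigma)\in X_H$, which is exactly your Step~3. The paper gains brevity by reusing the observer machinery; your argument is self-contained and actually valid.

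One point to tighten concerns your reading of determinism. Under the literal construction in Step~1 of Section~\ref{s5}, each replaced transition is entered and left through $\varepsilon$-edges. Reading determinism as ``no $\varepsilon$-transitions at all'' therefore makes case~2 of your path lemma vacuous and essentially forces $\xi^a=\emptyset$. The paper's Case~1 quietly uses a simplified $G^\pi$, and its singleton-observer step makes the same assumption. Your argument needs less. Step~2 only uses that $\zeta(y_0,v)\cap X$ has at most one element for each $v$, since your path lemma places both $\xi(x_0,w)$ and $\xi(x_0,w')$ in that set. Stating the hypothesis this way makes Steps~1 and~2 consistent and also covers the literal construction of Case~1.
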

\noindent {\em Proof}

Since $G_{\varepsilon}^\pi$ is a deterministic automaton, $G_{\varepsilon}^\pi$ and $G_{obs}^\pi =(Y,\Sigma _o, \zeta, y_0, Y_m)$ are isomorphic to $G^\pi$. 
Hence, all states $y \in Y$ are singletons, that is $y= \{ x \}$. Therefore, the condition in Corollary \ref{Corollary1} becomes
\begin{align*}
	& x \in X_H \wedge \xi (x,\sigma) \in X \Rightarrow \xi (x,\sigma) \in X_H \\
	\Leftrightarrow \
	& x \in X_H \wedge \xi (x,\sigma) \in X_H ,
\end{align*}
which is true, because $H$ is a sub-automaton of $G$.

\hfill \bull

\emph{Case 2.} Event $e1$ in all transitions $tr=(x, e1, x')$ is deleted by the attacker, that is, for all $tr=(x, e1, x')$, $A_{tr}=\{ \varepsilon \}$ (deletion attack). In this case, $G^\pi=G_{\varepsilon}^\pi$ is shown in Fig. \ref{fig:Ga2}. 

\begin{figure}[htb!]
	\centering
	\includegraphics[scale=0.5]{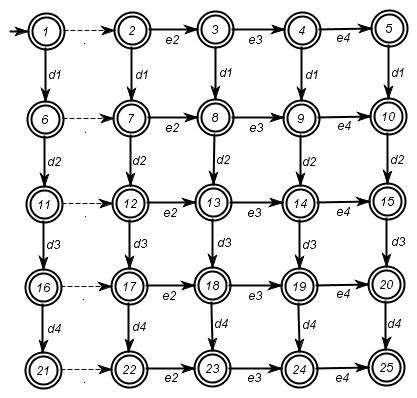}
	\caption{The automaton $G^\pi=G_{\varepsilon}^\pi$ for Case 2. }
	\label{fig:Ga2}
\end{figure}

The corresponding CA-observer $G_{obs}^\pi$ is shown in Fig. \ref{fig:observerGa2}. It can be checked that the condition in Corollary \ref{Corollary1} is satisfied. 
Therefore, $J$ is CA-D-observable with respect to $L(G)$, $\Sigma _{o}$, $\Sigma ^a_o$ and $\Phi^\pi$.

\begin{figure}[htb!]
	\centering
	\includegraphics[scale=0.5]{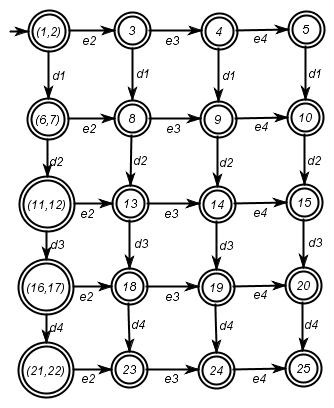}
	\caption{The CA-observer $G_{obs}^\pi$ for Case 2. }
	\label{fig:observerGa2}
\end{figure}

\emph{Case 3.} Event $d2$ in transitions $tr_1=(8, d2, 13)$ is replaced by $d2 \{d2, d3\}^*$, that is, $A_{tr_1}=d2 \{d2, d3\}^*$ and Event $e2$ in transitions $tr_2=(12, e2, 13)$ is replaced by $e2 \{d2, d3\}^*$, that is, $A_{tr_2}=e2 \{d2, d3\}^*$. In this case, $G^\pi=G_{\varepsilon}^\pi$ is shown in Fig. \ref{fig:Ga3}. 

\begin{figure}[htb!]
	\centering
	\includegraphics[scale=0.5]{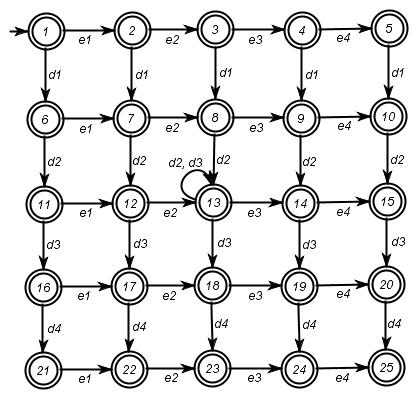}
	\caption{The automaton $G^\pi=G_{\varepsilon}^\pi$ for Case 3, }
	\label{fig:Ga3}
\end{figure}

The corresponding CA-observer $G_{obs}^\pi$ is shown in Fig. \ref{fig:observerGa3}. Let us check the condition in Corollary \ref{Corollary1}. For $y=(13, 18)$,
\begin{align*}
	& 18 \in X_H \wedge \xi (18,e3) = 19 \in X \wedge \xi (18,e3) = 19 \not\in X_H \\
	\Rightarrow \
	& \neg (\forall x \in y) x \in X_H \wedge \xi (x,\sigma) \in X \Rightarrow \xi (x,\sigma) \in X_H ,
\end{align*}
and 
\begin{align*}
	& 13 \in X_H \wedge \xi (13,e1) = 14 \in X_H \\
	\Rightarrow \
	& (\exists x' \in y) x' \in X_H \wedge \xi (x',\sigma) \in X_H .
\end{align*}
Hence, the condition in Corollary \ref{Corollary1} is not satisfied. Therefore, $J$ is not CA-D-observable with respect to $L(G)$, $\Sigma _{o}$, $\Sigma ^a_o$ and $\Phi^\pi$.

\begin{figure}[htb!]
	\centering
	\includegraphics[scale=0.5]{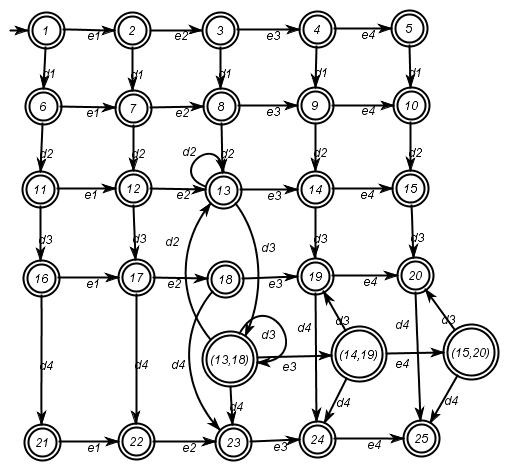}
	\caption{The CA-observer $G_{obs}^\pi$ for Case 3.}
	\label{fig:observerGa3}
\end{figure}

We now investigate actuator attacks and assume $\Sigma ^a_o = \emptyset$. The first condition in CA-D-controllability, $J (\Sigma _{uc} \cup \Sigma ^a_c) \cap L(G) \subseteq J$, is satisfied if and only if $d1, d3, e2, e3 \not\in \Sigma^a_c$. 
The second condition in CA-D-controllability, $J \subseteq (\Sigma - \Sigma^a_c)^*$, is satisfied if and only if $\Sigma^a_c = \emptyset$. 
Therefore, to ensure CA-D-controllability is satisfied, it is required that no controllable events are attackable. This is rather restrictive, but it is necessary for the existence of a deterministic supervisor. 
Intuitively, this is because if any controllable event is attackable, then the attacker can disable it and hence prevent the supervised system from achieving $J$.

If $J$ is both CA-D-controllable with respect to $L(G)$, $\Sigma _{uc}$, and $\Sigma ^a_c$, and CA-D-observable with respect to $L(G)$, $\Sigma _{o}$, $\Sigma ^a_o$ and $\Phi^\pi$, as in Case 2, 
then a deterministic supervisor $\cS_p$ (= $\cS_r$) can be designed based on the CA-observer $G_{obs}^\pi$ in Fig. \ref{fig:observerGa2}. The disablement by supervisor $\cS_{p}$ is shown in Table \ref{Table1}.

\begin{table}[htb!]
	\label{Table1}
	\centering
	\caption{Disablement by supervisor $\cS_{p}$}
	\begin{tabular}{|c|c|}
		\hline
		state  &  disablement    \\
		\hline 
		3   & $d1$ \\
		\hline 
		4   & $d1$ \\
		\hline 
		(6,7)   & $e2$ \\
		\hline 
		14   & $d3$ \\
		\hline
		18   & $e3$ \\
		\hline
	\end{tabular}
\end{table}

Next, we illustrate nonblocking supervisory control under cyber-attacks. Assume that the marked state is $X_m = \{ 25 \}$. The new $G$ and $H$ are shown in Fig. \ref{fig:Gm} and Fig. \ref{fig:Hm}, respectively. Since
$$
X_{m,H} = X_H \cap X_m = \{25\}, 
$$
by Theorem \ref{Theorem6}, $K=L_m(H)$ is $L_m(G)$-closed. Therefore, nonblocking supervisory control exists if and only if deterministic supervisory control exists. In particular, nonblocking supervisory control exists for Case 1 and Case 2, but not for Case 3.

\begin{figure}[htb!]
	\centering
	\includegraphics[scale=0.5]{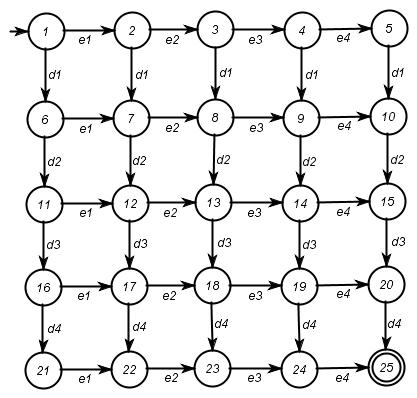}
	\caption{The new automaton $G$ for nonblocking supervisory control.}
	\label{fig:Gm}
\end{figure}

\begin{figure}[htb!]
	\centering
	\includegraphics[scale=0.5]{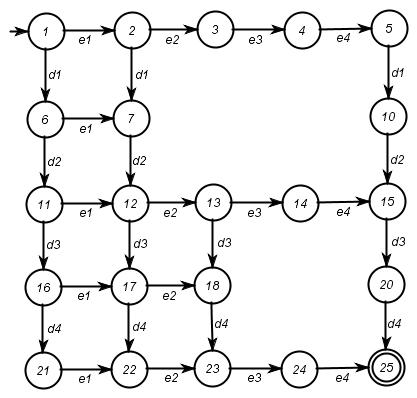}
	\caption{The new automaton $H$ for nonblocking supervisory control.}
	\label{fig:Hm}
\end{figure}

\section{Conclusions} \label{s7}

{We investigate supervisory control of discrete event systems under cyber-attacks to ensure a deterministic (unique) closed-loop language and reachability to marked states. The main contributions of the paper are as follows.}
(1) Deterministic supervisory control problem and nonblocking supervisory control problem are proposed to meet the needs of control of cyber-physical systems under cyber-attacks. 
(2) CA-D-controllability and CA-D-observability are introduced and shown to be the existence condition of deterministic supervisory control. 
(3) The relation between deterministic supervisory control and nonblocking supervisory control under cyber-attacks is discovered and existence condition for nonblocking supervisory control is derived. 
(4) A method for checking CA-D-controllability and CA-D-observability is developed and implemented using automata. In future work, we will consider what to do if CA-D-controllability and CA-D-observability are not satisfied.

%


\begin{biography}[{\includegraphics[width=1in,height=1.25in,clip,keepaspectratio]{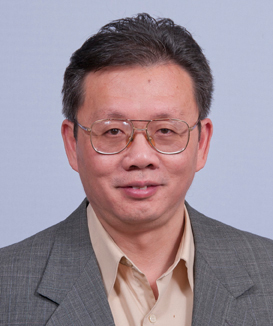}}]{Feng Lin}
	(S'85-M'88-SM'07-F'09) received his B.Eng. degree in electrical engineering from Shanghai Jiao Tong University, Shanghai, China, in 1982, and the M.A.Sc. and Ph.D. degrees in electrical engineering from the University of Toronto, Toronto, ON, Canada, in 1984 and 1988, respectively. He was a Post-Doctoral Fellow with Harvard University, Cambridge, MA, USA, from 1987 to 1988. Since 1988, he has been with the Department of Electrical and Computer Engineering, Wayne State University, Detroit, MI, USA, where he is currently a Professor. 
	
	His current research interests include discrete event systems, hybrid systems, neural networks, robust control, and their applications in alternative energy, biomedical systems, machine learning, and automotive control. He authored a book entitled ``Robust Control Design: An Optimal Control Approach'' and coauthored a paper that received a George Axelby outstanding paper award from the IEEE Control Systems Society. 
	
\end{biography}

\begin{biography}[{\includegraphics[width=1in,height=1.25in,clip,keepaspectratio]{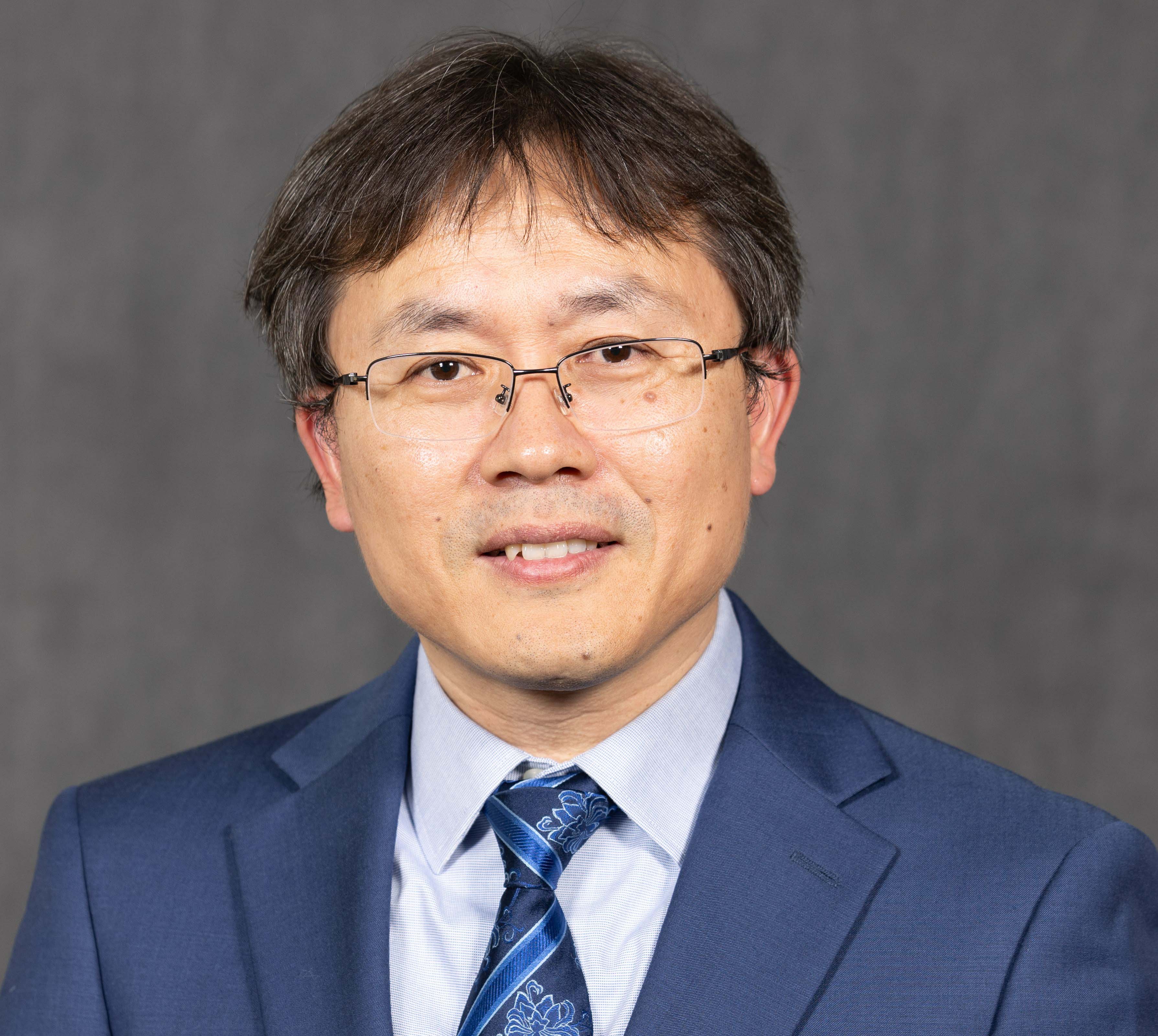}}]{Caisheng Wang}
	(F'25) received the BS and MS degrees from Chongqing University, China, in 1994 and 1997, respectively, and the Ph.D. degree from Montana State University, Bozeman, MT, in 2006, all in electrical engineering. 
	
	From August 1997 to May 2002, he worked as an electrical engineer and then as a vice department chair at Zhejiang Electric Power Test \& Research Institute, Hangzhou, China. Since August 2006, he has joined Wayne State University, where he is currently a Professor in the Department of Electrical and Computer Engineering. His current research interests include modeling and control of power systems and electric vehicles, power electronics, energy storage devices, distributed generation and Microgrids, alternative/hybrid energy power generation systems, and fault diagnosis and on-line monitoring of electric apparatus. He is an Associate Editor of IEEE Electrification Magazine and was an Associate Editor of several other journals, including IEEE Transactions on Smart Grid.	
	
\end{biography}

\begin{biography}[{\includegraphics[width=1in,height=1.25in,clip,keepaspectratio]{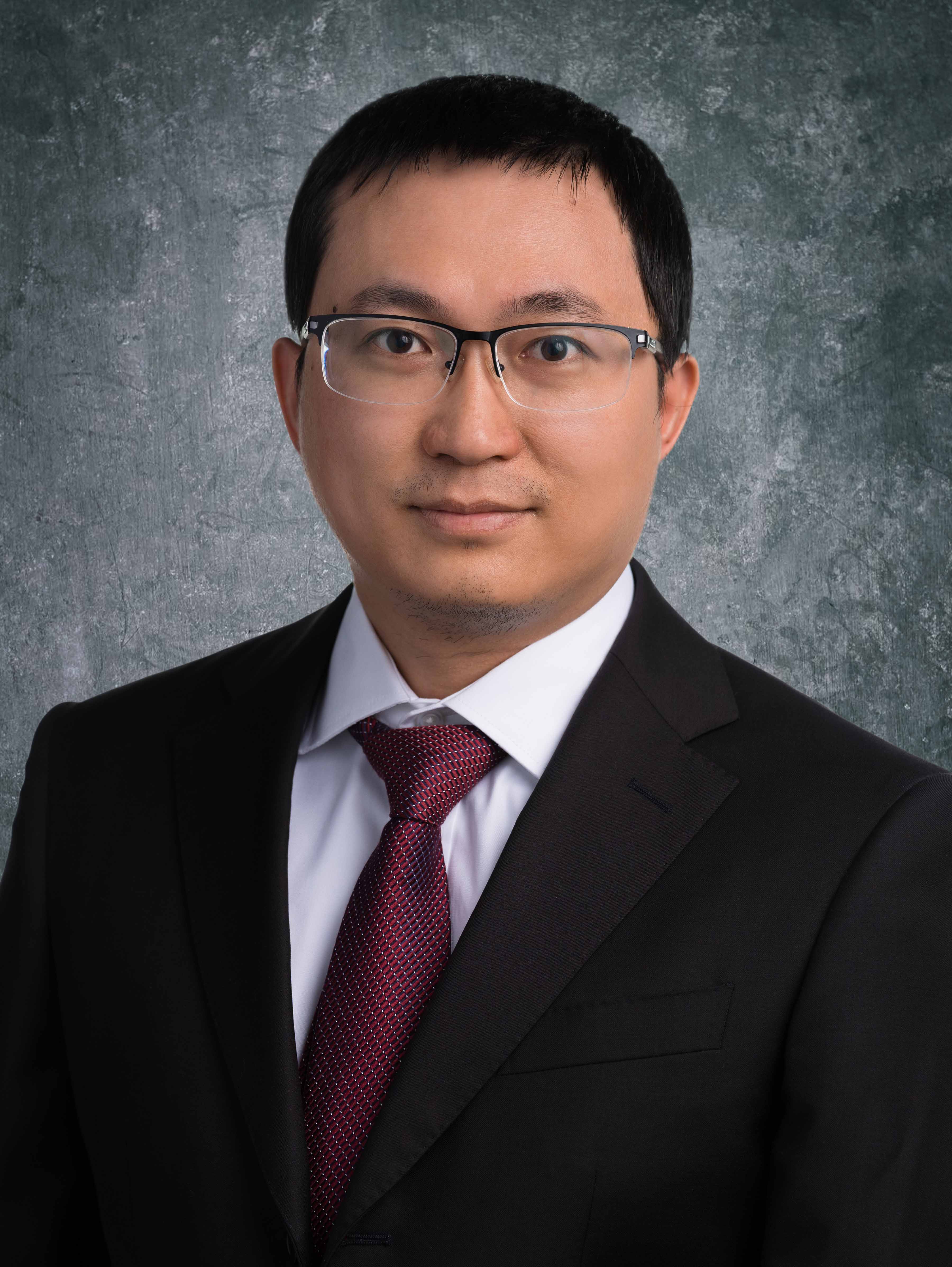}}]{Jun Chen}
	(S'11-M'14-SM'20) received his Bachelor's degree in Automation from Zhejiang University, Hangzhou China, in 2009, and Ph.D. in Electrical Engineering from Iowa State University, Ames IA, USA, in 2014. He was with Idaho National Laboratory from 2014 to 2016 and with General Motors from 2017 to 2020. Dr. Chen joined Oakland University in 2020, where he is currently an associate professor at the ECE department.	
	
	His research interests include advanced control and optimization, model predictive control, artificial intelligence, and stochastic hybrid systems, with applications in intelligent vehicles, robotics, and energy systems. Dr. Chen is a recipient of the NSF CAREER Award, the Best Paper Award from \textsc{IEEE Transactions on Automation Science and Engineering}, and the Best Paper Award from \textsc{IEEE International Conference on Electro Information Technology}, the Best Paper Award from \textsc{IEEE Cyber Awareness \& Research Symposium}, the New Investigator Research Excellence Award and Outstanding Graduate Mentor Award from Oakland University, the Publication Achievement Award from Idaho National Laboratory, the Research Excellence Award from Iowa State University, and the Outstanding Student Award from Zhejiang University. He is currently a Senior Member of the IEEE. 
	
\end{biography}

\begin{biography}[{\includegraphics[width=1in,height=1.25in,clip,keepaspectratio]{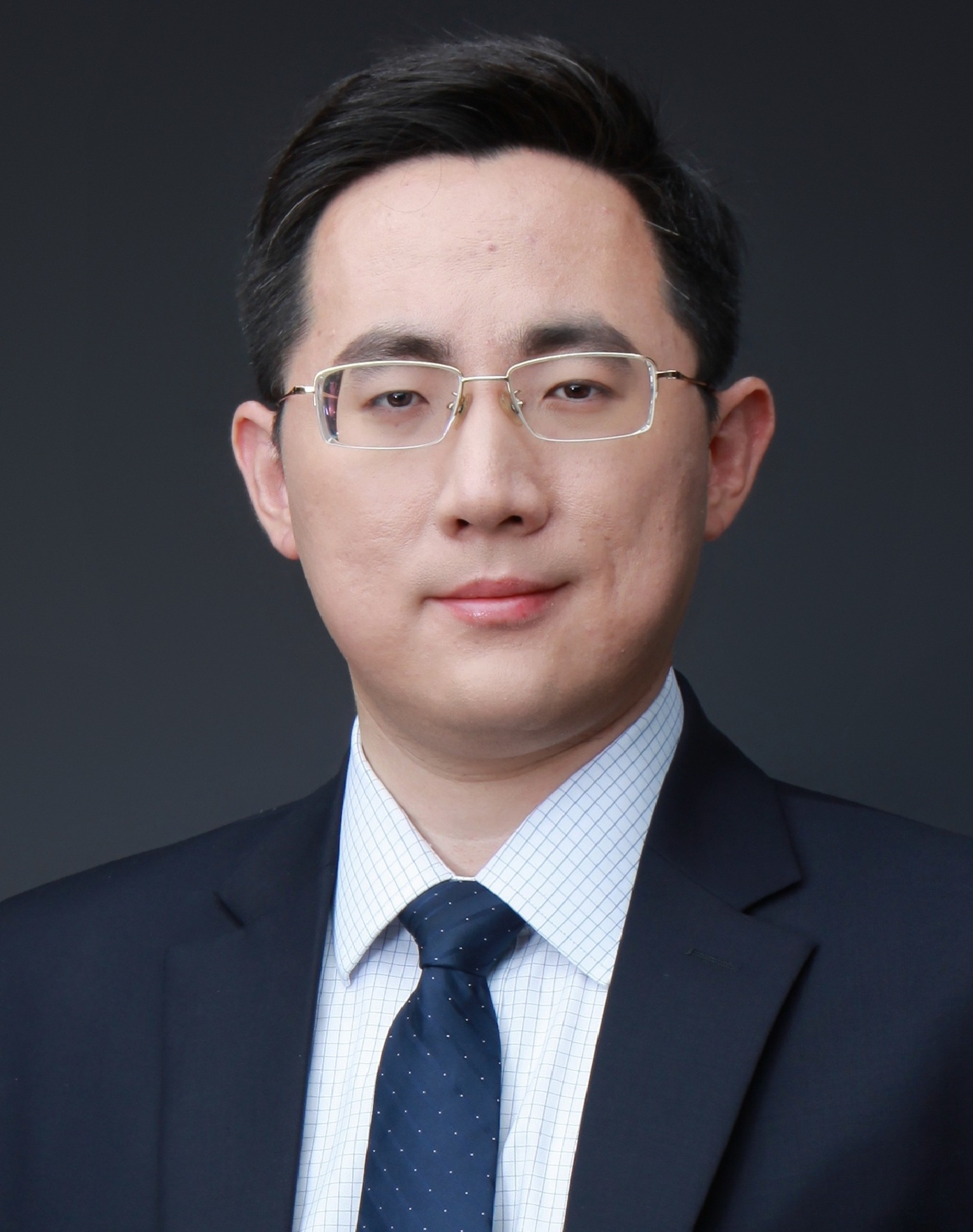}}]{Xiang Yin}
	(S'14-M'17)  was born in Anhui, China, in 1991. He received the B.Eng degree from Zhejiang University in 2012, the M.S. degree from the University of Michigan, Ann Arbor, in 2013, and the Ph.D degree from the University of Michigan, Ann Arbor, in 2017, all in electrical engineering. Since 2017, he has been with the School of Automation and Intelligent Sensing, Shanghai Jiao Tong University, where he is a Full Professor. His research interests include formal methods, discrete-event systems, robotics, artificial intelligence and cyber-physical systems. 
	
	Dr. Yin is serving as the chair of the \emph{IEEE CSS Technical Committee on Discrete Event Systems},  Associate Editors for the \emph{Journal of Discrete Event Dynamic Systems: Theory \& Applications}, \emph{Nonlinear Analysis: Hybrid Systems}, \emph{IEEE Control Systems Letters}, \emph{IEEE Transactions on Automation Science and Engineering},  and a member of the \emph{IEEE CSS Conference Editorial Board}. 
	
\end{biography}

\end{document}